\newcommand{\R}{\mathbb{R}}
\newcommand{\C}{\mathbb{C}}
\newcommand{\N}{\mathbb{N}}
\newcommand{\Z}{\mathbb{Z}}
\newcommand{\T}{\mathbb{T}}
\newcommand{\Sym}{{\rm Sym}}
\newcommand{\Mat}{{\rm Mat}}
\newcommand{\diag}{\mbox{diag}}
\newcommand{\DC}{{\rm DC}}
\newcommand{\strip}{\mathscr{A}}
\newcommand{\mscale}{\mathscr{M}}
\newcommand{\paths}{\mathscr{P}_{\alpha' \to \alpha}}
\newcommand{\perms}{\pmb{\sigma}}
\newcommand{\newg}{\hat{g}}
\newcommand{\minor}{\mu_{N, (\alpha, \alpha')}}
\newcommand{\Greene}{G_{N, (\alpha, \alpha')}}
\newcommand{\Greenes}{G_{\scale, (\alpha, \alpha')}}
\newcommand{\bgreene}[1]{G_{#1, (n, n')}}
\newcommand{\meas}{{\rm meas}}
\newcommand{\sign}{\operatorname{sign}}
\newcommand{\sbad}{\mathscr{S}}
\newcommand{\snorm}[1]{\bigl\| #1\bigr\|} 
\newcommand{\sabs}[1]{\left| #1 \right|} 
\newcommand{\abs}[1]{\bigl| #1 \bigr|} 
\newcommand{\babs}[1]{\Bigl| #1 \Bigr|} 
\newcommand{\norm}[1]{\left\Vert#1\right\Vert} 
\newcommand{\normtwo}[1]{
{\left\vert\kern-0.25ex\left\vert\kern-0.25ex\left\vert #1 
    \right\vert\kern-0.25ex\right\vert\kern-0.25ex\right\vert} }
\newcommand{\less}{\lesssim}
\newcommand{\more}{\gtrsim}
\newcommand{\cost}{\mathfrak{c}}
\newcommand{\ep}{\epsilon} 
\newcommand{\la}{\lambda}
\newcommand{\ga}{\gamma}
\newcommand{\Ga}{\Gamma}
\newcommand{\om}{\omega}
\newcommand{\vpsi}{\vec{\psi}}
\newcommand{\Symcocycles}{C^{\om}_{r} (\T, \Sym_l(\R))}
\newcommand{\analyticf}[1]{C^{\om}_{r} (\T^{#1}, \R)}
\newcommand{\qpcmat}[1]{C^{\om}_{r} (\T^{#1}, \Mat_m (\R))}
\newcommand{\qpcmatl}[1]{C^{\om}_{r} (\T^{#1}, \Mat_l (\R))}
\newcommand{\scale}{\mathscr{N}}  
\newcommand{\B}{\mathscr{B}}
\newcommand{\dist}{{\rm dist}}
\newcommand\restr[2]{{
  \left.\kern-\nulldelimiterspace 
  #1 
  \vphantom{\big|} 
  \right|_{#2} 
  }}
\newcommand{\Bigo}{\mathcal{O}}
\theoremstyle{plain}
\newtheorem{theorem}{Theorem}
\newtheorem{proposition}{Proposition}
\newtheorem{lemma}[proposition]{Lemma}
\numberwithin{equation}{section}
\theoremstyle{remark}
\newtheorem{remark}{Remark}
\theoremstyle{definition}
\newtheorem{definition}{Definition}
\title[Localization for block Jacobi operators]{Anderson localization for one-frequency quasi-periodic block Jacobi operators}
\date{}
\begin{document}

\author{Silvius Klein}
\email{silviusk@impa.br}
\address{Departamento de Matem\'atica, Pontif\'icia Universidade Cat\'olica do Rio de Janeiro (PUC-Rio), Brazil}

\begin{abstract} 
We consider a one-frequency, quasi-periodic, block Jacobi operator, whose  blocks are generic matrix-valued analytic functions. We establish Anderson localization for this type of operator under the assumption that the coupling constant is large enough but independent of the frequency.
This generalizes a result of J. Bourgain and S. Jitomirskaya on localization for band lattice, quasi-periodic Schr\"o\-din\-ger operators.
\end{abstract}

\maketitle


\section{Introduction and statement}\label{introduction}

An integer lattice  quasi-periodic  Schr\"odinger operator is an operator $H_\la (x)$ on $l^2 (\Z) \ni \psi = \{\psi_n\}_{n \in \Z}$, defined by
\begin{equation*}\label{ s op}
[ H_\la (x) \, \psi ]_n := - (\psi_{n+1} + \psi_{n-1} -2 \psi_n) + \la \, f (x + n \om) \, \psi_n\,,
\end{equation*} 
where  $\la \neq 0$ is a coupling constant, $x \in \T=\R/\Z$ is a phase parameter that introduces some randomness into the system, $f \colon \T \to \R$ is a (bounded) potential function,   and $\om \in \T$ is a fixed irrational frequency.

Note that $H_\la (x)$ is a bounded, self-adjoint operator. Moreover, 
due to the ergodicity of the system, the spectral pro\-per\-ties of the family of operators $\{ H_\la (x) \colon x \in \T \}$ are independent of $x$ almost surely. 

\smallskip

In this paper we study  a more general Schr\"odinger-like operator on a  {\em band} integer lattice (which in some sense may be regarded as an approximation of a higher dimensional lattice).
Before we define such operators, let us introduce some notations and terminology.

\smallskip

All throughout, if $ m \in \N$ and if $M \colon \T \to \Mat_m (\R)$ is any matrix-valued function, we denote by $M^\top (x)$ the transpose of $M (x)$. 

We say that $M (x)$ has no constant eigenvalues if for any $w \in \C$, we have $\det [ M (x) - w \, I ] \not\equiv 0$ as a function of $x$. 

Furthermore, given a frequency $\om \in \T$,  for all $n \in \Z$ we denote by $M_n (x)$ the quasi-periodic matrix-valued function
$$M_n (x) := M (x + n \om) .$$   

All such matrix-valued functions $M$ will be assumed real {\em analytic} (mea\-ning that their entries are real analytic functions).  

Let us then denote by $\qpcmat{}$  the space 
 of  all analytic functions $M \colon \T \to \Mat_m(\R)$ having a holomorphic, continuous up to the boundary extension\footnote{We warn the reader that we identify the torus $\T = \R/\Z$ (an additive group)  with the unit circle $\mathbb{S}^1 \subset \C$ (a multiplicative group) via the map $x + \Z \mapsto e (x) := e^{2 \pi i x}$, but we maintain the additive notation, e.g. we write $x+\omega$ instead of $e (x) e(\omega)$.}
 to $\strip_r := \{ z \in \C \colon 1-r < \sabs{z} < 1 + r \}$, the annulus of  width $2 r$ around the torus $\T$.  
 We endow this space with the uniform norm $\norm{M}_r := \sup_{z\in\strip_r} \norm{M (z)}$. 

\medskip

Let $l \in \N$ be the width of the band lattice, fix an irrational frequency $\om \in \T$ and let $W, R, F \in \qpcmatl{}$. Assume that for all phases $x \in \T$, $R(x)$ and $F(x) \in \Sym_l (\R)$, i.e they are symmetric matrices.

\smallskip

A {\em quasi-periodic block Jacobi operator} is an operator  $H = H_{\la} (x)$ acting on $  l^2 (\Z \times \{1, \ldots, l\},  \, \R) \simeq l^2 (\Z, \R^l)$ by
\begin{equation}\label{J-op}
[ H_{\la} (x)  \, \vpsi ]_n := - (W_{n+1} (x) \, \vpsi_{n+1} + W^{\top}_{n} (x) \, \vpsi_{n-1} + R_{n} (x) \, \vpsi_{n}) + \la \, F_n (x) \, \vpsi_n ,
\end{equation}
where $\vpsi = \{ \vpsi_n \}_{n\in\Z} \in l^2 (\Z, \R^l)$ is any state, and as before, $x \in \T$ is a phase and $\la \neq 0$ is a coupling constant.

This model contains all quasi-periodic, finite range hopping Schr\"{o}\-din\-ger operators on integer or band integer lattices . The hopping term is given by the ``weighted'' Laplacian: 
\begin{equation*}\label{w-Laplace}
[ \Delta_W (x) \, \vpsi]_n :=   W_{n+1} (x) \, \vpsi_{n+1} + W^{\top}_{n} (x) \, \vpsi_{n-1} + R_{n} (x) \, \vpsi_{n} \, ,
\end{equation*}
where the hopping amplitude is encoded by the quasi-periodic matrix-valued functions $W_n (x)$ and $R_n (x)$.

The potential is the quasi-periodic matrix-valued function $\la \, F_n (x)$. 

\medskip

Denote $V (x) := \la F (x) + R (x)$. Since $H = H_{\la} (x)$ acts on $l^2 (\Z, \R^l)$, this operator can be represented in matrix form as an infinite, tri\-diagonal {\em block} matrix, whose building blocks are matrices in $\Mat (l, \R)$: 
$$
H = \left[\begin{array}{cccc}
\ddots & \ddots & 0 & \ddots\\
\\
- W_n^\top & V_n & - W_{n+1} & 0\\
\\
0 & - W_{n+1}^\top & V_{n+1} & - W_{n+2} \\
\\
\ddots & 0 & \ddots & \ddots
\end{array}\right] \, .
$$

\pagebreak

\begin{definition}
We say that an operator satisfies Anderson localization if it has pure point spectrum with exponentially decaying eigenfunctions.
\end{definition}

We may now  formulate the main result of this paper.

\begin{theorem}\label{thm}
Consider the quasi-periodic block Jacobi operator $H_\la (x)$ defined in~\eqref{J-op}, and assume that
\begin{subequations}\label{assumptions}
\begin{align}
\det [ W (x) ]  \not \equiv 0,  \\ 
F (x)   \text{ has no constant eigenvalues.}  
\end{align}
\end{subequations}

There is $\la_0 = \la_0 (W, R, F, l) < \infty $ so that if $\sabs{\la} > \la_0$ and $x \in \T$, then for almost every frequency  $\omega \in \T$, the operator $H_\la (x)$ satisfies An\-derson localization.
 \end{theorem}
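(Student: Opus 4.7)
My plan is to follow the multi-scale analysis of Bourgain--Goldstein for analytic quasi-periodic operators, adapted to the block Jacobi setting. The first step is to translate the eigenvalue equation $H_\la (x) \vpsi = E \vpsi$, which takes the form
\[
W_{n+1} \vpsi_{n+1} + W_n^\top \vpsi_{n-1} = (V_n - E) \, \vpsi_n ,
\]
into a first-order recursion for a matrix-valued cocycle. Where $\det W (x+\om) \neq 0$ one has the transfer matrix
\[
A_E (x) = \begin{pmatrix} W (x+\om)^{-1} (V (x) - E) & - W (x+\om)^{-1} W (x)^\top \\ I & 0 \end{pmatrix} \in \GL (2l, \R) ,
\]
whose iterate determinant telescopes to $(-1)^{l N} \det W (x) / \det W (x + N\om)$. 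The hypothesis $\det W \not\equiv 0$ means the singular set of $A_E$ is isolated; I would work with the globally analytic \emph{unnormalized} cocycle $\det W (x+\om) \cdot A_E (x) \in C^{\om}_r (\T, \Mat_{2l} (\R))$, tracking compensating determinantal factors when passing back to the invertible dynamics.

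The second step is to establish that for $\sabs{\la}$ sufficiently large and uniformly on any compact energy interval $\mathcal{I} \ni E$, all Lyapunov exponents $\LE{1} (E) \ge \ldots \ge \LE{2l} (E)$ of $A_E$ are positive and separated by gaps of order $\log \sabs{\la}$. After subtracting the common factor $\log \sabs{\la}$, the rescaled cocycle degenerates to a multiplication-type cocycle driven by the eigenvalues of $F (x)$; the assumption that $F$ has no constant eigenvalues, together with $\det W \not\equiv 0$, ensures these branches are distinct for a.e.\ $x$, and a Herman-type subharmonic argument converts this pointwise distinctness into a strict gap between consecutive \emph{averaged} exponents. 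This is the step where both hypotheses \eqref{assumptions} are essential.

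With a simple and positive Lyapunov spectrum in hand, the third step is to derive a uniform (in $E \in \mathcal{I}$) large deviation theorem for each iterate $\An{N}_E$ and each exterior power $\wedge^k \An{N}_E$: for appropriate $\sigma, \tau > 0$ and $N \ge N_0$,
\[
\meas \bigl\{ x \in \T \colon \babs{\tfrac{1}{N} \log \norm{\wedge^k \An{N}_E (x)} - \sum_{j=1}^k \LE{j} (E)} > N^{-\tau} \bigr\} < e^{-N^{\sigma}} .
\]
This is obtained by inductively applying the Avalanche Principle for matrix cocycles, combined with Cartan and {\L}ojasiewicz estimates for the subharmonic extensions to $\strip_r$. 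From the LDT one derives off-diagonal exponential decay of the finite-volume Green's functions $(H_{\La, \la} (x) - E)^{-1}$ at rate $\LE{l} (E) > 0$ on most boxes $\La$: by Cramer's rule each matrix entry is a ratio of analytic characteristic determinants of $H_\La - E$, both of which are accessible through the scalar LDT.

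The final step is the Bourgain--Goldstein localization scheme: after reducing by Schnol's theorem to showing decay of any polynomially bounded generalized eigenfunction, I combine the Green's function estimates with an elimination of double resonances -- namely, the set of pairs of far-apart boxes which simultaneously fail the LDT at a common energy has, for almost every $\om$, negligible measure. In the analytic setting this elimination relies on a quantitative transversality / {\L}ojasiewicz argument applied to the analytic determinants $\det (H_\La - E)$, in place of the semi-algebraic toolkit used for multi-frequency models. I expect the principal obstacle to lie in the second step: proving uniform positivity \emph{and} simplicity of all $2l$ Lyapunov exponents, since the inductive Avalanche Principle bootstrap of Step~3 demands a scale-free gap between every consecutive pair of exponents, yet the perturbative analysis from the $\la = \infty$ limit must simultaneously control the zero set of $\det W$ and the generic level crossings among the eigenvalues of $F$.
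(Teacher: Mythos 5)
Your proposal takes a genuinely different route from the paper, and while the program you outline is a legitimate alternative direction (close in spirit to the Duarte--Klein Lyapunov exponent machinery that the paper cites in its remarks), the paper itself bypasses essentially all of your Steps~1--3. Rather than passing through a matrix cocycle, its Lyapunov spectrum, exterior powers, and the Avalanche Principle, the paper works \emph{directly} with the scalar function $u_N(x) = \frac{1}{Nl}\log\,\abs{\det(H_N(x) - E)}$. The upper bound on the off-diagonal minors (the numerator in Cramer's rule) is obtained not from an LDT for $\wedge^k A_E^{(N)}$ but from an elementary combinatorial decomposition of the minor into ``paths'' through the tridiagonal block structure -- a rewriting of a lemma of Bourgain--Jitomirskaya -- which immediately gives the crucial polynomial decay in $\abs{n(\alpha)-n(\alpha')}$. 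The lower bound on the determinant is obtained by writing $H_N(z) - E = \la D_N(z) + B_N(z)$ with $D_N$ block-diagonal in the $F_j - (E/\la)I$, invoking the no-constant-eigenvalue hypothesis to find a circle $\abs{z}=1+y_0$ where $D_N$ is uniformly invertible, and then transferring the lower bound from that circle to $\T$ by Hardy's convexity theorem for means of subharmonic functions. The LDT then follows from the quantitative Birkhoff ergodic theorem applied to the single subharmonic function $u_N$, with no inductive bootstrap needed. What this buys is substantial: the paper never needs positivity, let alone simplicity, of the individual Lyapunov exponents; the singular set of $\det W$ (which makes your transfer cocycle only $\Mat_{2l}$-valued and forces the unnormalized-cocycle bookkeeping you describe) plays no role; and the scale-free spectral gaps your Step~3 requires are never needed.

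Two concrete concerns with carrying your program through. First, even granting your Step~2, the passage from an LDT for $\wedge^k A_E^{(N)}$ to decay of individual minors of $H_N - E$ is not automatic: minors are not simply wedge-products of transfer matrices, and extracting the polynomial dependence on $\abs{n(\alpha)-n(\alpha')}$ (which is exactly what makes the Green's function estimate \eqref{good Green} usable in the paving argument) is precisely what the combinatorial path lemma provides; your sketch does not supply a substitute. Second, your final step proposes to replace the semi-algebraic elimination of resonances by a ``quantitative transversality / {\L}ojasiewicz'' argument. The paper actually \emph{does} use the semi-algebraic toolkit (Corollary~9.7 and Lemma~6.1 of Bourgain's monograph and the truncation-to-polynomials reduction): the set $\B_N^M$ defined by the sublevel inequality \eqref{qBET} is converted to a semi-algebraic set of controlled degree, and the frequency elimination estimate $\meas(\Omega_N) \lesssim N^{-C/2}$ rests on this. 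A {\L}ojasiewicz-based replacement in the analytic one-frequency setting would require machinery you have not supplied, and it is not clear it yields the $N^{-C/2}$ summable decay needed for Borel--Cantelli. You correctly identify your Step~2 as the principal obstacle; the paper's contribution is to show that this obstacle need not be faced at all for the localization statement.
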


  \begin{remark}
 This theorem generalizes a result of J. Bourgain and S. Jitomirskaya~\cite{BJ-band}. The Anderson localization result proven in~\cite{BJ-band}  corres\-ponds to the special case  $W (x) \equiv I_l \in \Mat_l (\R)$,  $$R (x) \equiv R:=  \left[\begin{array}{cccc}
\ddots & \ddots & 0 & \ddots\\
1 & 0 & 1 & 0\\
0 & 1 & 0  &1 \\
\ddots & 0 & \ddots & \ddots
\end{array}\right] \in \Mat_l (\R)$$ and $F (x) \equiv \diag \, \left [ f_1 (x), \ldots f_l (x) \right ]$, where the diagonal entries of $F$ are non-constant functions $f_j \in \analyticf{}$.
 \end{remark}
 
 \begin{remark}
P. Duarte and S. Klein have shown (see Theorem 3.1 in \cite{DK1}) that the sets of matrix valued analytic functions $W \in \qpcmat{}$ and respectively $F \in \Symcocycles$ satisfying the assumptions~\eqref{assumptions} of the theorem  are {\em generic} in a strong sense, namely they are open, dense and prevalent in their respective spaces.

\smallskip

Furthermore, under these same assumptions on the data, and for every frequency $\om$,  it was shown (see Theorem 2.3 in \cite{DK1}) that all nonnegative Lyapunov exponents of the operator $H_\la (x)$ have a lower bound of the type $\log \sabs{\la} - \Bigo (1)$.
 
 \smallskip
 
Recently,  in the same setting and under the same assumptions, P. Duarte and S. Klein have shown (see Corollary 6.1 in~\cite{Coposim}) that if the frequency $\om$ is Diophantine, then the integrated density of states 
of the operator $H_\la (x)$ is weak-H\"older continuous. 
 \end{remark}
 

Given a finite set $\scale \subset \Z$, denote by $H_\scale$ the finite-volume restriction of $H$ to $l^2 (\scale, \R^l) \simeq l^2 (\scale \times \{1, \ldots, l\}, \R)$.
When $\scale = \{1, \ldots, N\}$ for some $N \ge 1$, we use the shorthand notation $H_N$ instead.
Thus 
$$
H_N = \left[\begin{array}{cccc}
V_1 & - W_2 & 0 & \\
\\
- W_2^\top & V_2 & \ddots & \ddots \\
\\
\ddots & \ddots & \ddots & - W_N\\
\\
& 0  & - W_{N}^\top & V_N \\
\end{array}\right]
$$
is an $N \times N$ block matrix-valued function. Each block is an $l \times l$ analytic matrix. Hence $H_N (x)$ may also be regarded as an $N l \times N l$ matrix. 

\smallskip

Given an energy $E \in \R$, the corresponding Dirichlet determinant \\$\det \left [ H_N (x) - E \, I_{N l} \right ]$ and the minors of the matrix $H_N (x) - E \, I_{N l}$ will play a crucial role in our analysis. 

\smallskip

We derive upper bounds on the minors, uniformly in the phase  (see Section~\ref{upper-bound}), and lower bounds for the Dirichlet determinant, on average\- (see Section~\ref{lower-bound}). From these bounds we derive, using Cramer's rule,  estimates on the entries of the Green's function
$G_N (x; E) := \left( H_N (x) - E \, I_{N l} \right)^{-1}$, which in turn lead to the proof of localization (see Section~\ref{localization}).

\smallskip

The main challenge of this paper, and the place where our argument differs most from the one in~\cite{BJ-band}, is deriving the average {\em lower bounds}. To that end, we use the method introduced in P. Duarte and S. Klein~\cite{DK1} for obtaining lower bounds on Lyapunov exponents of quasi-periodic cocycles. 

\smallskip

We note that this method depends upon Hardy's convexity theorem for subharmonic functions, and thus it is a {\em one} variable argument. Therefore, the several variables model (where the potential and weight functions are defined on $\T^d$ with $d>1$, and $\om \in \T^d$ is a translation vector) requires a different set of tools, and it will be considered in a separate project.


\section{An upper bound on minors of Dirichlet matrices}\label{upper-bound}

To ease the presentation, given a block matrix $g$, we use roman letters for the indices of its block-matrix entries, and greek letters for the indices of its scalar entries. More precisely, we write 
$$g = ( g_{\ga, \ga'} )_{1\le \ga, \ga' \le N l} \in \Mat_{N l} (\C) ,$$
which is identified with the block matrix
$$g = ( \underline{g}_{n, n'} )_{1\le n, n' \le N} \in \Mat_N ({\rm R}), \  \text{ where } \, {\rm R} = \Mat_l (\C) .$$

Moreover, given $\ga \in \{1, \ldots, N l \}$, let $n (\ga) \in \{1, \ldots, n \}$ such that $$\ga = l \cdot n (\ga) + r \ \text{ with } \, - l \le r < 0 .$$

Hence given a pair of  indices $(\ga, \ga')$ with $1 \le \ga, \ga' \le N l$, the scalar entry $g_{\ga, \ga'}$ belongs to the $l \times l$-block  \ $\underline{g}_{n(\ga), n(\ga')}$ .


\subsection{Combinatorial preliminaries}

Some of the terminology and the intuition in what follows are related to the lemma of Gessel-Viennot-Lindstr\"om (see Chapter 31 in~\cite{proofs-book} or Chapter 5 in~\cite{enumeration}), although we do not directly use this beautiful result  relating paths in a graph and determinants.

Given a matrix $g = (g_{\ga, \ga'})_{1\le \ga, \ga' \le m} \in \Mat_m (\C)$, it is useful to re\-pre\-sent it as a directed bipartite graph as follows. The vertices of the graph are precisely the rows $R_1, \ldots, R_m$ and the columns $C_1, \ldots, C_m$ of the matrix, while for every pair $(\ga, \ga')$ of indices $\ga, \ga' \in \{1, \ldots, m \}$, there is an edge (or path) $R_\ga \to C_{\ga'}$ with weight (or cost) $g_{\ga, \ga'}$.

A {\em path system} is any collection $\Ga$ of paths
$$R_{\ga_1} \to C_{\ga_1'} \, , \ldots, R_{\ga_k} \to C_{\ga_k'} \, .$$

We will call {\em cost} (or weight) of a path system $\Ga$ the product of the corresponding weights:
$$\cost (\Ga) := g_{\ga_1, \ga_1'} \, \ldots \, g_{\ga_k, \ga_k'} \, .$$

A subset $\mscale \subset \{1, \ldots, m \}$ and a permutation $\sigma \in \perms_\mscale$ determine the following path system
$$\Ga_\sigma := \left\{   R_{\ga} \to C_{\sigma (\ga)} \, \colon \, \ga \in \mscale \right\} \, .$$

Note that $\Ga_\sigma$ is a vertex-disjoint path system with no isolated vertices from $\{ R_\ga \colon \ga \in \mscale \}$ to $\{ C_\ga \colon \ga \in \mscale \}$. Moreover, its cost is
$$\cost (\Ga_\sigma) = \prod_{\ga \in \mscale} \, g_{\ga, \sigma(\ga)} \, .$$

Any subset $\mscale \subset \{1, \ldots, m \}$ is considered ordered with the natural order. We denote by $\neg \mscale$ its complement in $\{ 1, \ldots, m \}$, again, ordered with the natural order. If $\mscale = \{\alpha\}$ is a singleton, we denote $\lnot \{\alpha\}$ by $\lnot \alpha$.

If $\mscale, \mscale' \subset  \{1, \ldots, m \}$  are two non-empty sets of indices, and if $g \in \Mat_m (\C)$, we denote by 
$$g_{\mscale, \mscale'} = (g_{\ga, \ga'})_{\ga \in \mscale, \ga' \in \mscale'}$$
the corresponding reduced matrix.

In particular, if $\alpha, \alpha' \in  \{1, \ldots, m \}$, then $g_{\lnot \alpha, \lnot \alpha'}$ represents the matrix obtained from $g$ by removing row $\alpha$ and column $\alpha'$, so $\det \left[  g_{\lnot \alpha, \lnot \alpha'} \right]$ is the $(\alpha, \alpha')$-minor of $g$.

Let  $\alpha, \alpha' \in  \{1, \ldots, m \}$ with $\alpha \neq \alpha'$. Any tuple $(\ga_1, \ldots , \ga_s)$  with $\ga_1 = \alpha'$, $\ga_s = \alpha$ and $\ga_i \in  \{1, \ldots, m \}$ all {\em distinct} indices determines the path system
$$\Ga = \left\{ (R_\alpha'=) R_{\ga_1} \to C_{\ga_2} \, , R_{\ga_2} \to C_{\ga_3} \, ,  \ldots, R_{\ga_{s-1}} \to C_{\ga_s} (=C_\alpha) \right\} \, .$$

We refer to any such path system simply as ``a path from $R_{\alpha'}$ to $C_\alpha$'' and identify it  with the tuple that determines it, hence $\Ga = (\ga_1, \ldots , \ga_s)$.
Moreover, we denote by $\paths$ the set of all such paths from $R_{\alpha'}$ to $C_\alpha$.

Given $\Ga = (\ga_1, \ldots , \ga_s) \in \paths$, denote by $\sabs{\Ga} = s$ its length and note that its cost is
$$\cost (\Ga) = g_{\ga_1, \ga_2} \, g_{\ga_2, \ga_3} \, \ldots \, g_{\ga_{s-1}, \ga_s} \, .$$

By abuse of notation, $\lnot \Ga$ will denote the complement of the set $\{ \ga_1, \ldots , \ga_s \}$ in $ \{1, \ldots, m \}$.

\smallskip

The following lemma shows how to express a minor of a matrix in terms of the costs of such paths. It was proven by J. Bourgain and S. Jitomirskaya (see Lemma 10 in~\cite{BJ-ac}). We reformulate it here using the terminology introduced above and then present a more detailed proof. 

\begin{lemma}\label{det and paths}
Let $g \in \Mat_m (\C)$ and let $\alpha, \alpha' \in  \{1, \ldots, m \}$ with $\alpha \neq \alpha'$. Then
$$\det \left[  g_{\lnot \alpha, \lnot \alpha'} \right] = (-1)^{\alpha+\alpha'} 
\sum_{\Ga \in \paths} (-1)^{\sabs{\Ga}+1} \, \cost (\Ga) \, \det \left[  g_{\lnot \Ga, \lnot \Ga} \right] \, .$$
\end{lemma}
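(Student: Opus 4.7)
The plan is to derive the identity from the Leibniz formula for determinants, combined with a cycle decomposition of the relevant permutations. First I would start from the standard relation between minors and cofactors, namely
$$(-1)^{\alpha+\alpha'}\,\det\left[g_{\lnot\alpha,\lnot\alpha'}\right] \;=\; \sum_{\substack{\tau\in S_m\\ \tau(\alpha)=\alpha'}}\sign(\tau)\,\prod_{\ga\neq\alpha} g_{\ga,\tau(\ga)},$$
obtained by isolating the coefficient of $g_{\alpha,\alpha'}$ in the Leibniz expansion of $\det g$, or equivalently by comparing the Laplace expansion along row $\alpha$ with that formula.

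Next I would reparametrize the right-hand sum by the cycle structure of $\tau$. For every $\tau\in S_m$ with $\tau(\alpha)=\alpha'$, the cycle of $\tau$ through $\alpha$ has the form $(\ga_1,\ga_2,\ldots,\ga_s)$ with $\ga_1=\alpha'$ and $\ga_s=\alpha$, closing via $\ga_s\mapsto\ga_1$; this is precisely a path $\Ga=(\ga_1,\ldots,\ga_s)\in\paths$ of length $\sabs{\Ga}=s$. The remaining cycles of $\tau$ assemble into a permutation $\sigma$ of the complementary set $\lnot\Ga$, and conversely every such pair $(\Ga,\sigma)$ comes from a unique $\tau$. Under this bijection the product factorizes as
$$\prod_{\ga\neq\alpha} g_{\ga,\tau(\ga)} \;=\; \cost(\Ga)\,\prod_{\ga\in\lnot\Ga} g_{\ga,\sigma(\ga)},$$
and, since an $s$-cycle has sign $(-1)^{s-1}$, the sign factorizes as $\sign(\tau)=(-1)^{\sabs{\Ga}-1}\sign(\sigma)=(-1)^{\sabs{\Ga}+1}\sign(\sigma)$.

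Finally I would collapse the sum by first summing over $\sigma$ with $\Ga$ fixed; the inner sum is precisely the Leibniz formula for $\det\left[g_{\lnot\Ga,\lnot\Ga}\right]$. This yields
$$(-1)^{\alpha+\alpha'}\,\det\left[g_{\lnot\alpha,\lnot\alpha'}\right] \;=\; \sum_{\Ga\in\paths}(-1)^{\sabs{\Ga}+1}\,\cost(\Ga)\,\det\left[g_{\lnot\Ga,\lnot\Ga}\right],$$
and multiplying both sides by $(-1)^{\alpha+\alpha'}$ gives the stated identity. There is no serious analytic obstacle here, only a bookkeeping one: the delicate step is confirming the sign conventions, specifically that the opening identity for $(-1)^{\alpha+\alpha'}\det[g_{\lnot\alpha,\lnot\alpha'}]$ agrees with the Leibniz sign on the nose, and that $\sign(\tau)$ indeed splits as $(-1)^{\sabs{\Ga}-1}\sign(\sigma)$ when $\sigma$ is regarded as a permutation of $\lnot\Ga$ with its induced order.
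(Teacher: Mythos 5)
Your proposal is correct and is essentially the same argument as the paper's: expand the cofactor via the Leibniz formula restricted to permutations with $\tau(\alpha)=\alpha'$, factor each such $\tau$ into the cycle through $\alpha$ (which is a path $\Ga\in\paths$) and a complementary permutation $\sigma$ of $\lnot\Ga$, split the sign as $(-1)^{\sabs{\Ga}+1}\sign(\sigma)$, and re-sum over $\sigma$ to recover $\det[g_{\lnot\Ga,\lnot\Ga}]$. The only cosmetic difference is that the paper obtains the opening identity by introducing the auxiliary matrix $\newg$ (replace the $(\alpha,\alpha')$ entry by $1$ and zero out the rest of row $\alpha$ and column $\alpha'$) and applying Leibniz to $\newg$, whereas you invoke the cofactor identity directly; the concern you flag about $\sign(\sigma)$ under the induced ordering is harmless, since the sign of a permutation of a finite set is independent of the labeling.
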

In other words, the $(\alpha, \alpha')$-minor of $g$ can be expressed as a signed sum over paths from $R_{\alpha'}$ to $C_\alpha$ of the product between the cost of the path and the determinant of the reduced matrix obtained by removing the rows and columns corresponding to vertices in the path.

\begin{proof}
Let $\newg \in \Mat_m (\C)$ be the matrix obtained from $g$ by replacing its $(\alpha, \alpha')$-entry with $1$, and all other entries on row $\alpha$ or column $\alpha'$ with $0$ (while keeping all other entries the same). 

Since $g_{\lnot \alpha, \lnot \alpha'} \in \Mat_{m-1} (\C)$ is the matrix obtained from $g$ by removing row $\alpha$ and column $\alpha'$, we have
$$\det \left[  g_{\lnot \alpha, \lnot \alpha'} \right] = (-1)^{\alpha+\alpha'}  \, \det \left[ \newg \right] .$$

Using the permutations description of the determinant we have:
$$\det \left[ \newg \right] = \sum_{\sigma \in \perms_m} \sign (\sigma) \, \prod_{\ga=1}^m \newg_{\ga, \sigma(\ga)} .$$

In the language of graphs introduced earlier and applied to the matrix $\newg$ and to its corresponding directed bipartite graph, we have
$$\det \left[ \newg \right] = \sum_{\sigma \in \perms_m} \sign (\sigma) \, \cost (\Ga_\sigma) ,$$
or in other words, the determinant of $\newg$ is the signed sum over all vertex-disjoint path systems (from the rows $R_1, \ldots, R_m$ to the columns $C_1, \ldots, C_m$ of $\newg$) of the costs of these path systems. 

It turns out that if $\sigma \in \perms_m$ with $\sigma (\alpha) \neq \alpha'$, then the product
$$\cost (\Ga) = \prod_{\ga=1}^m \newg_{\ga, \, \sigma(\ga)} = 0 ,$$
since it contains the term $\newg_{\alpha, \sigma(\alpha)}$, which is zero (precisely because $\sigma (\alpha) \neq \alpha'$). 

Therefore, the cost of paths $\Ga_\alpha$ corresponding to such permutations is zero, hence
\begin{equation}\label{combinatorics eq1}
\det \left[ \newg \right] = \sum_{\substack{\sigma \in \perms_m \\ \sigma(\alpha)=\alpha'}} \sign (\sigma) \, \cost (\Ga_\sigma) .
\end{equation}
 
Every permutation can be written in a unique way as a product of cycles. Since $\sigma(\alpha)=\alpha'$, there is a unique cycle $\tau$ in this representation of $\sigma$ that contains $\alpha$ and $\alpha'$. 

We may write this cycle as $\tau = (\ga_1, \ldots , \ga_s)$  with $\ga_1 = \alpha'$, $\ga_s = \alpha$ and $\ga_i$ all distinct. We then have the representation
$$\sigma = \tau \cdot \tau^\perp ,$$
where $\tau^\perp$ is a permutation of the set $\{1, \ldots, m \} \setminus \{\ga_1, \ldots , \ga_s\}$ (unless this set is empty, in which case $\tau^\perp$ is the identity permutation).

Note that $\sign (\sigma) = \sign (\tau) \cdot \sign (\tau^\perp)$ and $\sign (\tau) = (-1)^{s+1}$.

Also note that the cycle $\tau$ determines the path (the reader should not mind the abuse of notation) $\Ga =  (\ga_1, \ldots , \ga_s) \in \paths$, and vice-versa.
We can now write
\begin{align*}
\cost (\Ga_\sigma) & = \prod_{\ga=1}^m \newg_{\ga, \, \sigma(\ga)} \\
& = \newg_{\ga_1, \, \sigma(\ga_1)} \cdot \ldots \cdot \newg_{\ga_{s-1}, \, \sigma(\ga_{s-1})} \cdot \newg_{\ga_s, \, \sigma(\ga_s)} \, 
\cdot \prod_{\ga \in \lnot \Ga} \newg_{\ga, \, \sigma(\ga)}  \\
& = g_{\ga_1, \ga_2} \cdot  g_{\ga_2, \ga_3} \cdot \ldots \cdot g_{\ga_{s-1}, \ga_s} \cdot \newg_{\alpha, \alpha'} \cdot
 \prod_{\ga \in \lnot \Ga} g_{\ga, \, \tau^\perp(\ga)}  \\
& = \cost (\Ga) \cdot 1 \cdot   \prod_{\ga \in \lnot \Ga} g_{\ga, \, \tau^\perp(\ga)} = \cost (\Ga) \cdot   \prod_{\ga \in \lnot \Ga} g_{\ga, \, \tau^\perp(\ga)}  \, .
\end{align*}

We rearrange the sum in~\eqref{combinatorics eq1} as
\begin{align*}
\det \left[ \newg \right] & = \sum_{\Ga \in \paths} (-1)^{\sabs{\Ga}+1} \, \cost (\Ga) \, \sum_{\sigma_\star \in \perms_{\lnot \Ga}} \sign (\sigma_\star) \prod_{\ga \in \lnot \Ga} g_{\ga, \, \sigma_\star (\ga)} \\
& = \sum_{\Ga \in \paths} (-1)^{\sabs{\Ga}+1} \, \cost (\Ga) \, \det \left[  g_{\lnot \Ga, \lnot \Ga} \right] \, .
\end{align*}
\end{proof}


\subsection{Statement and derivation of the upper bound}

We fix any coupling constant $\la$ and restrict the set of energies $E$ to a compact interval, say $\abs{E} \less \sabs{\la}$. If  $N \in \N$ and $1 \le \alpha, \alpha' \le N l$, we denote by $\minor (x; E)$  the $(\alpha, \alpha')$-minor of the Dirichlet matrix $H_N (x) - E \, I_{N l}$. That is,
$$\minor (x; E) :=  \det \left[  (H_N (x) - E \, I_{N l})_{\lnot \alpha, \lnot \alpha'} \right] .$$

\begin{proposition}\label{upper bound prop}
 There is $C = C (W, F, R, l) < \infty$ such that for all $N \in \N$, for all indices $1 \le \alpha, \alpha' \le N l$ and for all  $x \in \T$ we have the following uniform upper bound on the corresponding minor:
 \begin{equation*}
 \frac{1}{N l} \, \log \, \abs{ \minor (x; E) } \le \left( 1 - \frac{\abs{ n (\alpha) - n (\alpha') }}{N l} \right) \, \log \sabs{\la} + C .
 \end{equation*}
\end{proposition}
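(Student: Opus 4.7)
The plan is to apply Lemma~\ref{det and paths} to the matrix $g := H_N(x) - E\, I_{Nl}$ and then estimate the resulting signed sum over paths using the block-tridiagonal structure of $H_N$. The key observation I will exploit is a trichotomy for the entries: $g_{\ga, \ga'}$ vanishes when $|n(\ga) - n(\ga')| \ge 2$; has magnitude $O(|\la|)$ when $n(\ga) = n(\ga')$ (the diagonal block equals $\la F_n + R_n - E\, I_l$, and $\sabs{E} \less \sabs{\la}$); and has magnitude $O(1)$ when $|n(\ga) - n(\ga')| = 1$ (the off-diagonal blocks are $-W$ or $-W^\top$, independent of $\la$), with implicit constants depending only on $\norm{W}_r$, $\norm{R}_r$, $\norm{F}_r$, and $l$.

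First, I would show that any path $\Ga = (\ga_1, \ldots, \ga_s) \in \paths$ with nonzero cost must take steps with $|n(\ga_{i+1}) - n(\ga_i)| \le 1$, so that by the triangle inequality applied to the telescoping sum $n(\alpha) - n(\alpha') = \sum_i \bigl(n(\ga_{i+1}) - n(\ga_i)\bigr)$, the number $t$ of genuine inter-block transitions along $\Ga$ satisfies
$$t \ge \abs{n(\alpha) - n(\alpha')}.$$
Separating the $t$ inter-block edges (each $O(1)$) from the $s-1-t$ intra-block edges (each $O(\sabs{\la})$) then yields
$$\sabs{\cost(\Ga)} \le C_0^{s-1}\, \sabs{\la}^{s-1-t} \le C_0^{s-1}\, \sabs{\la}^{s-1-\abs{n(\alpha)-n(\alpha')}}.$$

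Next, I would bound the reduced determinant by Hadamard's inequality: the submatrix $g_{\lnot\Ga, \lnot\Ga}$ inherits the block-tridiagonal sparsity of $g$, so each of its $Nl - s$ rows has at most $3l$ nonzero entries, each of magnitude at most $C_1 \sabs{\la}$, giving
$$\babs{\det(g_{\lnot\Ga, \lnot\Ga})} \le C_2^{Nl-s}\, \sabs{\la}^{Nl-s}.$$
Multiplying the two estimates, each term in Lemma~\ref{det and paths} contributes at most $C_3^{Nl}\, \sabs{\la}^{Nl - \abs{n(\alpha) - n(\alpha')}}$, where I may assume $\sabs{\la} \ge 1$ (the regime of bounded $\sabs{\la}$ being trivial by Hadamard applied directly to $\minor$). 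Finally, since every vertex has at most $3l$ neighbours carrying a nonzero edge weight and the paths in $\paths$ are self-avoiding of length at most $Nl$, we have $\abs{\paths} \le (3l)^{Nl}$. Summing, taking logarithms, and dividing by $Nl$ yields the claimed uniform upper bound.

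The main obstacle, and the only place where the specific block structure of the operator is really used, is the edge classification at the start: the entire improvement of $\sabs{\la}^{\abs{n(\alpha) - n(\alpha')}}$ over the trivial Hadamard estimate $\sabs{\la}^{Nl}$ stems from the fact that inter-block entries of $g$ carry no factor of $\la$, while every path from $R_{\alpha'}$ to $C_\alpha$ must cross at least $\abs{n(\alpha) - n(\alpha')}$ inter-block boundaries. The remaining ingredients (Hadamard and path counting) are routine.
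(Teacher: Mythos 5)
Your proof is correct and follows essentially the same route as the paper: apply Lemma~\ref{det and paths}, classify each edge weight according to whether it lies in a diagonal block ($O(\sabs{\la})$), an immediately off-diagonal block ($O(1)$), or outside the tridiagonal band (zero), observe via a telescoping sum that any surviving path must take at least $\abs{n(\alpha)-n(\alpha')}$ bounded-cost steps, estimate the reduced determinant by Hadamard, and count paths crudely. The only cosmetic difference is that the paper groups paths by their number $b(\Ga)$ of bounded-cost steps and sums a geometric series in $\sabs{\la}^{-b}$, whereas you bound every term by the worst case $b = \abs{n(\alpha)-n(\alpha')}$ directly---equivalent, since that geometric series is dominated by its first term.
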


\begin{proof}
Recall that
$$
H_N - E \, I = \left[\begin{array}{cccc}
V_1 - E \, I & - W_2 & 0 & \\
\\
- W_2^\top & V_2 - E \, I & \ddots & \ddots \\
\\
\ddots & \ddots & \ddots & - W_N\\
\\
& 0  & - W_{N}^\top & V_N - E \, I\\
\end{array}\right] 
$$
where $V = \la F + R$. We may assume without loss of generality that $\sabs{\la} \ge 1$.

Let $C = C(W, F, R) < \infty$ be such that  the immediately off-diagonal blocks have the bound $\norm{W}_\infty < C$ and the diagonal blocks have the bound $\norm{V - E \, I}_\infty < C \sabs{\la}$. 

We assume that $\alpha \neq \alpha'$, otherwise the statement  is trivial.

We fix all parameters (i.e. $N, E, x, \alpha, \alpha'$) and let 
$$g := H_N (x) - E \, I \in \Mat_{N l} (\C) , \ \text{ so that }$$
\begin{align}
\minor (x; E) &= \det \left[  g_{\lnot \alpha, \lnot \alpha'} \right]  \notag \\ 
&=
(-1)^{\alpha+\alpha'} 
\sum_{\Ga \in \paths} (-1)^{\sabs{\Ga}+1} \, \cost (\Ga) \, \det \left[  g_{\lnot \Ga, \lnot \Ga} \right] \, , \label{ub eq3}
\end{align}
where the last equality is due to Lemma~\ref{det and paths}.

We estimate the determinant of the matrix $g_{\lnot \Ga, \lnot \Ga}\in \Mat_{N l - \sabs{\Ga}} (\C)$ simply by using Hadamard's inequality:
\begin{equation}\label{ub eq4}
 \abs{ \det \left[  g_{\lnot \Ga, \lnot \Ga} \right]  } \le (C \sabs{\la})^{N l - \sabs{\Ga}}  \, .
 \end{equation}
 
 Combining~\eqref{ub eq3} and \eqref{ub eq4} we have
\begin{equation}\label{ub eq10}
\abs{  \det \left[  g_{\lnot \alpha, \lnot \alpha'} \right]  } \le \sum_{\Ga \in \paths} \abs{ \cost (\Ga) } \, (C \sabs{\la})^{N l - \sabs{\Ga}}  \, .
\end{equation}

In order to estimate the cost $\cost (\Ga)$ of a path $\Ga \in \paths$, we need a more careful analysis.

Let $1 \le \ga, \ga' \le N l$. 

\begin{enumerate}[(i)]
\item If $n (\ga) = n (\ga')$ then $g_{\ga, \ga'}$ belongs to a diagonal block. Thus
$$\abs{ g_{\ga, \ga'} } \le C \sabs{\la} \ \ \text{ (high cost)}.$$

\item If $\abs{n (\ga) - n (\ga')} = 1$ then $g_{\ga, \ga'}$ belongs to an immediately off-diagonal block. Thus
$$\abs{ g_{\ga, \ga'} } \le C  \ \ \text{ (bounded cost)}.$$

\item If  $\abs{n (\ga) - n (\ga')} \ge 2$ then $g_{\ga, \ga'}$ does not belong to a tridiagonal block. Thus
$$\abs{ g_{\ga, \ga'} } = 0  \ \ \text{ (no cost)}.$$
\end{enumerate}

Note also that if $\abs{n (\ga) - n (\ga')} \le 1$ then $\sabs{\ga' - \ga} < 2 l$.

\smallskip

Therefore, in order to estimate the cost of a path $\Ga  \in \paths$ more precisely, we have to take into account the time spent by the path in the diagonal and immediately off-diagonal blocks. 

Indeed, if a path $\Ga = (\ga_1, \ga_2, \ldots, \ga_s) \in \paths$ ventures off the tri\-dia\-go\-nal blocks, that is, if $\sabs{n (\ga_{i+1}) - n (\ga_i)} \ge 2$  for some $1\le i \le s-1$, then 
$g_{\ga_{i}, \ga_{i+1}}  = 0$, so $\cost (\Ga) = 0$. 

\smallskip

We may then restrict ourselves to the collection $\paths^\star$ of paths $\Ga = (\ga_1, \ga_2, \ldots, \ga_s) \in \paths$ with $\sabs{n (\ga_{i+1}) - n (\ga_i)} \le 1$ for all $1\le i \le s-1$. 

\smallskip

Define for such a path
$$b = b (\Ga)  := {\rm card } \, \{ 1 \le i \le s-1  \colon \sabs{n (\ga_{i+1}) - n (\ga_i)} = 1\} $$
to be the number of steps of {\em bounded cost} taken by $\Ga$.

Since $\ga_i \in \{ 1, \ldots, N l \}$ are all distinct, $s \le N l$, so $b (\Ga) < N l$. Also,
$$\abs{ n (\alpha) - n (\alpha') } = \abs{ n (\ga_s) - n (\ga_1) } \le \sum_{i=1}^{s-1} \abs{ n (\ga_{i+1}) - n (\ga_i) } = b (\Ga) .$$

Then for every path $\Ga \in \paths^\star$ we have
$$\abs{ n (\alpha) - n (\alpha') }  \le b (\Ga) \le N l ,$$
so its cost has the bound
\begin{equation}\label{eq5}
\sabs{ \cost (\Ga)} = \prod_{i=1}^{s-1} \sabs{g_{\ga_{i}, \ga_{i+1}}} \le C^{b (\Ga)} \, \left( C \sabs{\la} \right)^{s-b(\Ga)} = C^{\sabs{\Ga}} \, \sabs{\la}^{\sabs{\Ga} - b (\Ga)} \, .
\end{equation}

Furthermore, for any path $\Ga = (\ga_1, \ga_2, \ldots, \ga_s) \in \paths^\star$ we have $\sabs{\ga_{i+1} - \ga_i} < 2 l$ for all indices $1\le i \le s-1$. Since $s \le N l$, 
it follows that there are at most $(4 l)^{N l}$ paths in $\paths^\star$.

Therefore, using~\eqref{ub eq10} and summing over all paths $\Ga \in \paths^\star$ with the same number $b = b (\Ga)$ of steps of bounded cost, we have
\begin{align*}
\abs{  \det \left[  g_{\lnot \alpha, \lnot \alpha'} \right]  } & \le \sum_{b = \sabs{ n (\alpha) - n (\alpha') }}^{N l} \ \sum_{\substack{\Ga \in \paths^\star \\ b (\Ga) = b}}  
\abs{\cost (\Ga)} \,  \left( C \sabs{\la} \right)^{N l - \sabs{\Ga}} \\
& \le \sum_{b = \sabs{ n (\alpha) - n (\alpha') }}^{N l} \ C^{\sabs{\Ga}} \, \sabs{\la}^{\sabs{\Ga} - b }  \,  \left( C \sabs{\la} \right)^{N l - \sabs{\Ga}} \, (4 l)^{N l} \\
& = (4 l C)^{N l} \, \sabs{\la}^{N l} \,  \sum_{b = \sabs{ n (\alpha) - n (\alpha') }}^{N l} \sabs{\la}^{-b} \\
& \less (4 l C)^{N l} \, \sabs{\la}^{N l} \, \sabs{\la}^{-  \sabs{ n (\alpha) - n (\alpha') }} \, ,
\end{align*}
which completes the proof, after taking logarithms.  
\end{proof}

\section{A lower bound on Dirichlet determinants}\label{lower-bound}

\begin{proposition}\label{lower bound prop}
There are finite constants $\la_0 = \la_0 (W, F, R)$ and $C = C (W, F, R)$ such that for all $ \om \in \R$, $N \in \N$ and $\la$ with $\sabs{\la} \ge \la_0$ we have
\begin{equation}\label{lower bound eq}
\int_\T \frac{1}{N l} \, \log \, \abs{ \det \left [ H_N (x) - E \, I_{N l} \right ] }  \, d x \ge \log \sabs{\la} - C 
\end{equation}
for all energies $E$ in a compact interval (say for $\sabs{E} \less \sabs{\la}$).
\end{proposition}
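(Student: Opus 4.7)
My plan is to adapt to the Dirichlet determinant the subharmonic-function method from \cite{DK1} for lower bounds on Lyapunov exponents, which is built around Hardy's convexity theorem for subharmonic functions on the annulus $\strip_r$.

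First, I would extract the dominant scale. Since
$$H_N(x) - E \, I_{Nl} = \la \, (F_N(x) - w \, I) + B_N(x) ,$$
with $w := E/\la$ bounded (as $\sabs{E} \less \sabs{\la}$) and $B_N$ encoding the $W$-off-diagonal and $R$-diagonal blocks (so that $\norm{B_N}$ is bounded independently of $N$ and $\la$), the proposition reduces to proving
$$\int_\T v_N(x) \, dx \; \geq \; - C, \quad v_N(x) := \frac{1}{Nl}\log\bigl|\det\bigl((F_N(x) - w I) + \la^{-1} B_N(x)\bigr)\bigr| ,$$
uniformly in $N$. The unperturbed analogue is
$$v_N^\circ(x) := \frac{1}{Nl}\log \abs{\det(F_N(x) - w I)} = \frac{1}{Nl}\sum_{n=1}^N \log \abs{p_F(x + n\om, w)},$$
where $p_F(x, w) := \det(F(x) - w I)$. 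By shift-invariance of Lebesgue measure on $\T$,
$$\int_\T v_N^\circ(x)\, dx = \frac{1}{l}\int_\T \log \abs{p_F(x, w)} \, dx.$$
The no-constant-eigenvalues hypothesis guarantees that $p_F(\cdot, w) \not\equiv 0$ for every $w \in \C$, so the right-hand side is a finite real number, continuous in $w$ (the pushforward by $F$ of Lebesgue on $\T$ is atom-less, and its logarithmic potential is therefore continuous) and uniformly bounded below on compact $w$-sets. Hence $\int_\T v_N^\circ \geq - C_0$ uniformly in $N$.

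Second, I would control the perturbation $\la^{-1} B_N$ using the DK1 machinery. The function $v_N$ extends to a subharmonic function on $\strip_r$ (the determinant being a polynomial in the entries of $W, R, F$ at shifted complex arguments, hence analytic on $\strip_r$), and admits a uniform upper bound $v_N \leq C_1$, inherited from the proof of Proposition~\ref{upper bound prop} upon replacing $\norm{\cdot}_\infty$ with $\norm{\cdot}_r$. The method of \cite{DK1}, via Hardy's convexity theorem, produces a complex parameter $z_\star \in \strip_r$ at which the correction $v_N(z_\star) - v_N^\circ(z_\star)$ is of order $\Bigo(\sabs{\la}^{-1})$, because $(F_N(z_\star) - w I)^{-1}$ is controlled there thanks to the non-degeneracy of $p_F$. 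Transferring this pointwise estimate to the torus average using Hardy's convexity and the uniform upper bound yields $\int_\T (v_N - v_N^\circ)\, dx \geq - C_2$, and combining with the previous step gives $\int_\T v_N \geq - C$, as required.

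The main obstacle is the construction of the auxiliary parameter $z_\star$: one needs $z_\star \in \strip_r$ such that all $N$ shifted matrices $F(z_\star + n\om) - w I$, for $n = 1, \ldots, N$, are \emph{simultaneously} well-conditioned, uniformly in $N$ and for $w$ in a compact set. In the diagonal case of \cite{BJ-band}, $F$ has scalar non-constant entries, and it suffices to avoid zeros of finitely many analytic functions on $\T$, an elementary matter. In the genuinely matrix-valued setting of the present paper, one must instead analyze the two-variable analytic function $p_F(z, w)$ on $\strip_r \times \C$, and ensure that the relevant subharmonic lower bounds survive the simultaneous shift by $n\om$ for all $n \leq N$. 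This is precisely the subharmonic analysis of \cite{DK1}, and making it uniform in $N$ and in $E$ is the technical heart of the proof.
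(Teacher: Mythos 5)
Your overall strategy is the one the paper uses: exploit the no-constant-eigenvalues hypothesis on $F$ to control $(F(z) - wI)^{-1}$, work on the annulus $\strip_r$, and run Hardy's convexity theorem for subharmonic means to transfer a lower bound from a circle in the interior of $\strip_r$ back to the torus. But there are two structural issues that your sketch glosses over, and they are exactly where the paper's argument does its work.

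First, the decomposition into an "unperturbed" piece $v_N^\circ$ and a "perturbation" $v_N - v_N^\circ$ does not mesh with Hardy's convexity: the difference of two subharmonic functions is not subharmonic, so you cannot apply convexity to $v_N - v_N^\circ$ as a standalone object. Nor is there any natural place to produce your $\int_\T (v_N - v_N^\circ)\,dx \ge -C_2$ from a pointwise estimate at a single $z_\star$ — Hardy's convexity relates \emph{circle averages} at different radii, not a point value to a circle average. The paper avoids both problems by applying Hardy's convexity directly to the full subharmonic function $u_N(z) = \frac{1}{Nl}\log\abs{\det[H_N(z) - E I]}$: it proves a uniform upper bound on $\abs{z}=1+r$ via Hadamard, a uniform lower bound on an inner circle $\abs{z}=1+y_0$ via the factorization $H_N - EI = \la D_N \left[I + \la^{-1} D_N^{-1} B_N\right]$ together with the lower bound on $\abs{\det D_N}$ and the invertibility estimate $\norm{D_N^{-1}}\le (C\ep_0^{-1})^l$, and then interpolates using the log-convexity of $s \mapsto \int_{\abs{z}=s} u\, \frac{dz}{2\pi}$. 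There is no separate treatment of an unperturbed piece; the factorization handles both at once, on the inner circle.

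Second, you correctly identify the "simultaneous well-conditioning of all $N$ shifted matrices" as the key obstacle, but you don't resolve it, and the resolution is the crux. The paper observes that the complexified translation $z \mapsto z\cdot e(\om)$ preserves each circle $\abs{z}=\mathrm{const}$; hence if one chooses a single circle $\abs{z}=1+y_0$ on which $\abs{\det[F(z)-tI]}\ge\ep_0^l$ holds uniformly in $z$ on that circle and in $t$ in a compact set, then all the shifted blocks $F_j(z) = F(z+j\om)$ automatically inherit the same bound, for every $j$ and every $N$. Finding such a circle (uniformly in $t$) is done by a short lower-semicontinuity argument for the function $\varepsilon(t) := \sup_{y\in[\delta/2,2\delta]}\inf_{\abs{z}=1+y}\abs{\det[F(z)-tI]}$, which uses only that $\det[F(\cdot)-tI]$ is holomorphic and not identically zero. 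Your alternative claim — that $w \mapsto \int_\T\log\abs{p_F(x,w)}\,dx$ is continuous because the pushforward measure is atom-less — is not justified as stated (atom-lessness of a measure does not in general imply continuity of its logarithmic potential) and is in any case unnecessary once the argument is set up on the invariant circle as above.
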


\begin{proof}
The matrix-valued functions $W$ and $V = \la F + R$ are holomorphic on $\strip_r$ and continuous up to the boundary; in particular, for some constant $C < \infty$, $\norm{W}_r \le C$ and  $\norm{V}_r \le C \sabs{\la}$. It follows that the block matrix-valued function $H_N - E\,I$ is also holomorphic on $\strip_r$ and considering only energies $E$ with $\sabs{E} \le C \sabs{\la}$, $\norm{H_N - E\,I}_r \le 2 C \sabs{\la}$.

Let
$$u (z) := \frac{1}{N l} \, \log \, \abs{ \det \left [ H_N (z) - E \, I \right ] } .$$

Then $u (z)$ is {\em subharmonic} on $\strip_r$ and using Hadamard's inequality to estimate the determinant, it satisfies the upper bound
\begin{equation}\label{upper bound 1}
u (z) \le \log \sabs{\la} + \Bigo(1) \quad \text{ for all } z \in \strip_r .
\end{equation} 

We will derive a lower bound for $u (z)$ on a circle $\sabs{z} = 1 + y_0$, where $0 < y_0 \ll r$.

Write 
\begin{equation}\label{eq 0}
H_N (z) - E \, I = \la D_N (z) + B_N (z) ,
\end{equation}
where
$$D_N (z) = \diag \left [ F_1 (z) - \frac{E}{\la} \, I, \ldots, F_l (z) - \frac{E}{\la} \right ] $$
is a block diagonal matrix and 
$$B_N  = \left[\begin{array}{cccc}
R_1 & - W_2 &  \\
- W_2^\top & R_2 & \ddots  \\
 & \ddots & \ddots & - W_N\\
 &  & - W_{N}^\top & R_N \\
\end{array}\right]
$$
is the remaining block tridiagonal matrix.

Note that $$\norm{B_N}_r \le C \quad \text{and } \ \norm{D_N}_r \le C .$$ 

We assume that $F$ has no constant eigenvalues: for any $t \in \R$ we have $\det \left [ F (x) - t\, I \right] \not\equiv 0$ (as a function of $x \in \T$). So given any $0 < \delta \ll r$, there is $y_0 \sim \delta$ and there is $\ep_0 = \ep_0 (\delta, r, F) > 0$ such that 
\begin{equation}\label{det 1}
\sabs{ \det \left [ F (z) - t\, I \right ] } \ge \ep_0^l
\end{equation}
for all $z$ with $\sabs{z} = 1 + y_0$ and for all $t$ with $\sabs{t} \le C$.

This follows from Corollary 4.6 in~\cite{DK1}. We present here an alternative, self contained argument. 

Let $f_t (z) := \det \left [ F (z) - t \, I \right]$. Since $f_t (z)$ is holomorphic and non-identically zero on $\strip_r$, it has a finite number of zeros in any compact set, in particular in the annulus 
$\strip:= \left \{ z \colon 1+\frac{\delta}{2} \le \sabs{z} \le 1 + 2 \delta \right\}$.

Therefore, there are circles $\sabs{z} = 1 + y$ with $\frac{\delta}{2} \le y \le 2 \delta$ on which $f_t (z)$ has no zeros, so
$$\varepsilon (t) :=  \sup_{\frac{\delta}{2} \le y \le \delta} \ \inf_{z \colon \sabs{z} = 1 + y} \  \sabs{ f_t (z) } > 0 .$$

We show below that the map $\R \ni t \mapsto \varepsilon (t)$ is lower semicontinuous. This will imply that it has a positive lower bound when $t$ is restricted to a compact interval, thus establishing \eqref{det 1}.

To prove the lower semicontinuity of $\varepsilon (t)$, fix $t$ and assume that $$\varepsilon (t) > \ep_0^l .$$ 

(Since $l$ is fixed, the power $l$ in $\ep_0^l$ is just for aesthetic reasons related to the calculations that follow). Then there is $y \in \left[ \frac{\delta}{2}, 2 \delta \right]$ such that on the circle $\sabs{z} = 1+y$ we have
$$\sabs{ \det \left [ F (z) - t\, I \right] } = \sabs{ f_{t_0} (z) } \ge \ep_0^l .$$

Hence on $\sabs{z} = 1+y$, the matrix $F (z) - t \, I$ is invertible  and by Cramer's formula,
\begin{equation}\label{det 30}
\norm{ \left(F (z) - t\, I\right)^{-1}  } \le \frac{\norm{{\rm adj} \left[ F (z) - t \, I \right]}}{ \sabs{ \det \left [ F (z) - t\, I \right ] } } \le \frac{C^l}{\ep_0^l} = \left(C \, \ep_0^{-1}\right)^l .
\end{equation}

Furthermore, for any $t'$ we have
\begin{align}
f_{t'} (z) & = \det \left [ F (z) - t' \, I \right] = \det \left [ F (z) - t I + (t - t')  \, I \right]  \notag \\
& = \det \left [ F (z) - t \, I \right] \, \det \left[ I +  (t-t') \, \left( F (z) - t I \right)^{-1} \right] \label{det 40}
\end{align}

To estimate from below the second determinant on the right hand side above, we use the following simple fact. 

If $g \in \Mat_m (\R)$ is an invertible matrix and if we denote by $s_1(g) \ge \ldots \ge s_m (g) > 0$ its singular values, then $\norm{g^{-1}} = s_m(g)^{-1}$, so
$$\sabs{ \det \left [ g \right ] } = s_1(g) \ldots s_m (g) \ge \norm{g^{-1}}^{-m} .$$

Now if $g \in \Mat_m (\R)$ and $\norm{g} < 1$, then $I + g$ is invertible and $\norm{(I+g)^{-1}} \le (1-\norm{g})^{-1}$. By the previous considerations applied to $I+g$ we conclude
\begin{equation}\label{det 20}
\sabs{ \det \left [ I + g \right ] } \ge (1-\norm{g})^{m} .
\end{equation}

Applying \eqref{det 20} to $g = (t-t') \, \left( F (z) - t I \right)^{-1}$ and using \eqref{det 30} (which also ensures that $\norm{g} < 1$ if $\sabs{t-t'} \ll 1$), we have:
\begin{align*}
\sabs{ \det \left[ I +  (t-t') \, \left( F (z) - t I \right)^{-1} \right] } & \ge \left(1 - \sabs{t-t'} \, \left(C \, \ep_0^{-1}\right)^l \right)^l \\
& = 1- \Bigo(\sabs{t-t'}) .
\end{align*}

Combined with \eqref{det 40}, this  implies that on $\sabs{z} = 1+y$ we have
$$\sabs{f_{t'} (z)} \ge \ep_0^l \, \left( 1- \Bigo(\sabs{t-t'}) \right) .$$

Hence 
$$\varepsilon (t') \ge  \inf_{z \colon \sabs{z} = 1 + y}  \  \sabs{ f_{t'}  (z) } \,  \ge \, \ep_0^l \, \left( 1- \Bigo(\sabs{t-t'}) \right) ,$$
which proves the lower semicontinuity of the function $\varepsilon$.

\smallskip

Thus the lower bound \eqref{det 1} holds, and it implies that on the circle $\sabs{z} = 1 + y_0$  the matrix $F (z) - t \, I$ is invertible and as in \eqref{det 30} we have 
$$ \norm{ \left(F (z) - t\, I\right)^{-1}  } \le \left(C \, \ep_0^{-1}\right)^l .$$

Since the complexified dynamics leaves the circle $\sabs{z} = 1 + y_0$  invariant, all the blocks 
$F_j (z) - t \, I = F(z+j\om)-t\,I$ are invertible and their inverses have the same bound as above there.

Therefore, on the circle $\sabs{z} = 1 + y_0$, for all $E$ with $\sabs{E} \le C \sabs{\la}$, the block matrix
$D_N (z) = \diag \left [ F_1 (z) - \frac{E}{\la} \, I, \ldots, F_l (z) - \frac{E}{\la} \, I \right ]$ is invertible as well, and since
$$D_N^{-1} (z) = \diag \left [ \left(F_1 (z) - \frac{E}{\la}\, I\right)^{-1}, \ldots, \left(F_l (z) - \frac{E}{\la} \, I \right)^{-1} \right ] ,$$
we conclude that
\begin{equation}\label{det 5}
\norm{D_N^{-1} (z)} \le \left(C \, \ep_0^{-1}\right)^l .
\end{equation}

Furthermore, using \eqref{det 1} again,
\begin{equation}\label{det 2}
\sabs{ \det \left [ D_N (z) \right ] } = \prod_{j=1}^N \sabs{ \det \left [ F_j (z) - \frac{E}{\la} \, I \right ] } \ge \ep_0^{N l} .
\end{equation}

Going back to \eqref{eq 0}, we can now write for $\sabs{z} = 1 + y_0$
$$H_N (z) - E \, I  = \la D_N (z) \, \left [ I + \la^{-1} \, D_N^{-1} (z) \, B_N (z) \right ] ,$$
so
\begin{equation}\label{det 3}
\sabs{ \det \left [ H_N (z) - E \, I \right ] } = \sabs{\la}^{N l} \,  \sabs{ \det \left [ D_N (z) \right ] }  \, \sabs{\det\left[I+\la^{-1}  D_N^{-1} (z)  B_N (z)\right]} .
\end{equation}

As before, we estimate from below the second determinant on the right hand side of \eqref{det 3} by applying \eqref{det 20}
 to $g := \la^{-1}  D_N^{-1} (z)  B_N (z) \in \Mat_{N l} (\R)$.
Using \eqref{det 5}, on the circle $\sabs{z} = 1 + y_0$ we have
$$\norm{ \la^{-1}  D_N^{-1} (z)  B_N (z) } \le \frac{\left(C \, \ep_0^{-1}\right)^l \, C}{\sabs{\la}} \le \frac{1}{2}$$
provided $\sabs{\la}$ is large enough.
Then
\begin{equation}\label{det 6}
 \sabs{\det\left[I+\la^{-1}  D_N^{-1} (z)  B_N (z)\right] } \ge 2^{- N l} .
\end{equation}

Combining \eqref{det 3}, \eqref{det 2} and \eqref{det 6} we conclude that on $\sabs{z} = 1 + y_0$ we have
\begin{equation}\label{lower bound sh}
u (z) = \frac{1}{N l} \, \log \, \abs{ \det \left [ H_N (z) - E \, I \right ] } \ge \log \sabs{\la} + \log \frac{\ep_0}{2} .
 \end{equation}
 
We gather up below the estimates \eqref{upper bound 1} and \eqref{lower bound sh} on the subharmonic function $u (z)$, defined on the annulus $\strip_r$.
\begin{subequations}
\begin{align*}
u (z) & \le \log \sabs{\la} + \Bigo(1) & \kern-2em \text{on }  & \sabs{z} = 1 + r \\
u (z) & \ge \log \sabs{\la} - \Bigo(1) & \kern-2em \text{on } & \sabs{z} = 1+ y_0 .
\end{align*}
\end{subequations}

By Hardy's convexity theorem (see for instance \cite{Duren}), the mean of a subharmonic function on an annulus is radially $\log$-convex. More precisely, the function
$$(1-r, 1+r) \ni s \mapsto \int_{\sabs{z}=s} u (z) \frac{d z}{2 \pi}$$
is convex as a function of $\log s$.  

Since $1 < 1+y_0 < 1+r$, letting $\alpha:= \frac{\log (1+y_0)}{\log (1+r)} \in (0, 1)$ so that
$$\log (1+y_0) = (1-\alpha) \log (1) + \alpha \log (1+r) ,$$
we then have
$$\int_{\sabs{z}=1+y_0} u (z) \frac{d z}{2 \pi} \le (1-\alpha) \int_{\sabs{z}=1} u (z) \frac{d z}{2 \pi} + \alpha \int_{\sabs{z}=1+r} u (z) \frac{d z}{2 \pi} .$$

Combined with the lower bound on $\sabs{z}=1+y_0$ and the upper bound on $\sabs{z} = 1 + r$, this implies a lower bound for the mean on $\sabs{z}=1$:
\begin{align*}
(1-\alpha) \int_{\T} u (x) \, d x & = (1-\alpha) \int_{\sabs{z}=1} u (z) \frac{d z}{2 \pi}   \\
&\kern-4.5em \ge \log \sabs{\la} - \Bigo(1) - \alpha \left( \log \sabs{\la} + \Bigo(1) \right) 
= (1-\alpha) \log \sabs{\la} - \Bigo(1),
\end{align*}
hence
$$\int_{\T} u (x) \, d x \ge \log \sabs{\la} - \Bigo(1) ,$$ which concludes the proof of this proposition.
\end{proof}

\section{Green's functions estimates and the proof of localization}\label{localization}

Let $\scale = [a, b] = \{ a, a+1, \ldots , b-1, b \} \subset \Z$  be an interval of integers and let $\sabs{\scale} = b-a+1$ be its length. Consider the corresponding {\em Green's function}
$$G_\scale (x) = G_\scale (x; E) := \left( H_\scale (x) - E \, I \right)^{-1} , $$
defined whenever $H_\scale (x) - E \, I$ is invertible. 

We may regard $G_\scale (x; E)$ as a $\sabs{\scale} \times \sabs{\scale}$ block matrix, whose blocks are $l \times l$ matrices over $\R$. In this case we denote its entries by 
$\bgreene{\scale} (x; E)$, where $n, n' \in \scale$ (as before, we use roman letters for the indices of the block entries of such a matrix).

The Green's function may also be regarded as a $\sabs{\scale} l  \times \sabs{\scale} l$ matrix with real entries, which we denote  by  $G_{\scale, (\alpha, \alpha')}  (x; E)$, 
where $(a-1) l +1 \le \alpha, \alpha'  \le b l$.

Furthermore, given a function $\vpsi \colon \Z \to \R^l$, its finite volume restriction is
$$\vpsi_\scale := 
\begin{bmatrix}
\vpsi_a \\
\vdots\\
\vpsi_b
\end{bmatrix}  \in \left(\R^l\right)^{\sabs{\scale}} \simeq \R^{ \sabs{\scale} \, l} \, .
$$

When the integers interval $\scale = [1, N]$ for some $N \ge 1$, we use the shorthand notations $G_N (x; E)$ and $\vpsi_N$ for $G_\scale (x; E)$ and $\vpsi_\scale$ respectively. 

Note that for any $j \in \Z$ we have
$G_\scale (x + j \om; E) = G_{\scale + j} (x; E)$. Hence for any interval $\scale = [a, b] \subset \Z$, 
$$G_\scale (x; E) = G_{\sabs{\scale}} (x + (a-1) \om; E) \, .$$

Recall also the function considered in the previous section
\begin{equation*}
u_N (x) = u_N (x; E) :=  \frac{1}{N l} \log \abs{ \det \left[ H_N (x) - E \, I \right]} \, .
\end{equation*}

\subsection{Green's functions estimates}

\medskip

A crucial ingredient in the proof of Anderson localization for the operator $H_\la (x)$ is the exponential decay of the off-diagonal entries of the Green's function $G_N (x)$. 
These estimates are obtained using a certain concentration of measure inequality for the function $u_N (x) $ as well as 
 the upper and lower bounds in Proposition~\ref{upper bound prop} and Proposition~\ref{lower bound prop}.
 
The concentration of measure bounds  require an arithmetic assumption on the frequency $\om$.  
If $t > 0$ let $\DC_t$ be the set of frequencies $\omega\in\T$ satisfying the following {\em Diophantine condition}: 
\begin{equation}\label{DC}
\norm{k \, \om} \ge \frac{t}{\sabs{k}^{2}}
\quad \text { for all }\; k\in \Z\setminus\{0\}\;,
\end{equation}
where  for any real number $x$ we write $\norm{x}:=\min_{k\in\Z} \abs{x-k}$.

Note that $\displaystyle \cup_{t > 0} \DC_t$ is a set of full measure.

\smallskip

Let us formulate the Green's functions estimates needed in the proof of localization.

\begin{definition}\label{def good Green}
Let $\scale = [a, b] \subset \Z$ be an interval and let $(x, \om, E) \in \T \times \T \times \R$. We say that $G_\scale (x; E)$ is a {\em good} Green's function if for all $n, n' \in \scale$
\begin{equation}\label{good Green}
\norm{ \bgreene{\scale} (x; E)   } \le e^{- (\sabs{n -n'} - \frac{\sabs{\scale}}{50} ) \, \log \sabs{\la}} \ .
\end{equation}

Note that $\bgreene{\scale} (x; E) $ is an $l \times l$ block, and $\norm{ \bgreene{\scale} (x; E) }$ refers to any of the equivalent norms on $\Mat_l (\R)$. Clearly~\eqref{good Green} is equivalent to 
\begin{equation}\label{greens estimate eq}
\abs{ \Greenes (x ; E) } \less e^{- (\sabs{n (\alpha)-n(\alpha')} - \frac{\sabs{\scale}}{50} ) \, \log \sabs{\la}} \, 
\end{equation}
for all $(a-1) l + 1 \le \alpha, \alpha' \le b l$.
\end{definition}

We now describe the setting under which the concentration of measure and consequently the Green's functions estimates will hold.

Fix $t>0$, $\om \in \DC_t$ and $\delta > 0$ small enough, say $\delta = \frac{1}{100 l}$. 

Let $n_0$ be a large enough integer depending on $t$ and on $l$ and let $n_1 \gg n_0$.

Let $\la_0$ be a large enough, finite constant depending on the data (e.g. on $W, F, R, l$); $\la_0$ is essentially the constant from Proposition~\ref{lower bound prop}, but we might slightly increase it so that, for instance, $\log \la_0$ is much larger than the other constants (denoted by $C$) from Propositions~\ref{upper bound prop} and~\ref{lower bound prop}.

Fix any coupling constant $\la$ with $\sabs{\la} \ge \la_0$ and drop it from notations.

The energy $E$ will be such that $\sabs{E} \less \sabs{\la}$.

With this setup, we have the following result.

\begin{proposition}\label{Green's functions estimates}
There are absolute constants $a > 0$ and $p \in \N$ such that if  $M \ge n_0$ and $N \ge n_1$ then the following hold.
\begin{enumerate}[(i)]
\item  \label{Green item 1} Let
\begin{equation}\label{qBET}
 \B_{N}^M = \B_{N}^M (\om, E) :=  \left\{  x \in \T \colon \frac{1}{M} \sum_{j=0}^{M-1} u_N (x+j \om)  \le (1-\delta) \log \sabs{\la}   \right\}    .
\end{equation}

Then $$\meas \left( \B_{N}^M (\om, E) \right) < e^{-M^a} .$$

\item \label{Green item 2}  For every  $x \notin \B_{N}^M (\om, E)$ there is $0 \le j < M$ such that $G_N (x + j \om ; E)$ is a good Green's function.

\item \label{Green item 3}  For every $x \in \T$ there are integers $0 \le n < N^p$ such that $G_N (x + n \om ; E)$ is a good Green's function. In fact,
$$\# \left\{ 0 \le n < N^p \colon G_N (x + n \om ; E)  \text{ is {\em not} a good Green's function}  \right\} \ll N^p  .$$

\end{enumerate}
\end{proposition}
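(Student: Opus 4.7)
The plan is to establish parts (i), (ii), (iii) in sequence, with the latter two resting on the first.

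\smallskip

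\textbf{Part (i): Large deviation theorem.} I would invoke the Bourgain--Goldstein LDT for Birkhoff averages of subharmonic functions on the torus under a Diophantine frequency. The inputs are in place: Hadamard's inequality applied to the block matrix $H_N(z) - E\,I$ yields the uniform upper bound $u_N(z) \le \log\sabs{\la} + \Bigo(1)$ on $\strip_r$, while Proposition~\ref{lower bound prop} gives the mean lower bound $\int_\T u_N(x)\,dx \ge \log\sabs{\la} - \Bigo(1)$. Together these say that $\log\sabs{\la} + \Bigo(1) - u_N$ is a nonnegative subharmonic perturbation with $\Bigo(1)$ integral. Via the Riesz representation and the attendant decay of the Fourier coefficients of subharmonic functions, one controls the Weyl sums for the Birkhoff average under the DC hypothesis $\norm{k\om} \ge t/\sabs{k}^2$. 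The conclusion is the standard exponential concentration $\meas \{x : \frac{1}{M}\sum_{j=0}^{M-1} u_N(x+j\om) < \langle u_N \rangle - \eta\} < e^{-M^a}$ for some absolute $a > 0$. Because the deviation scale $\delta\log\sabs{\la}$ comfortably exceeds the $\Bigo(1)$ gap between pointwise upper bound and mean (for $\sabs{\la}$ large), the threshold $(1-\delta)\log\sabs{\la}$ translates directly to $\meas(\B_{N}^M) < e^{-M^a}$.

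\smallskip

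\textbf{Part (ii): Cramer plus pigeonhole.} If $x \notin \B_N^M$, then $\frac{1}{M}\sum_{j=0}^{M-1} u_N(x+j\om) > (1-\delta)\log\sabs{\la}$. Since each summand is bounded above by $\log\sabs{\la} + \Bigo(1)$, a pigeonhole argument produces some $j_0 \in [0, M)$ with $u_N(x+j_0\om) \ge (1 - \tfrac{1}{60})\log\sabs{\la}$; this is valid because $\delta = \tfrac{1}{100 l} < \tfrac{1}{60}$ and $\sabs{\la}$ is large enough to absorb the $\Bigo(1)$ error. Cramer's rule expresses each scalar entry $\Greene(x+j_0\om; E)$ as a signed ratio of $\minor(x+j_0\om; E)$ to $\det[H_N(x+j_0\om) - E\, I_{Nl}]$. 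The numerator obeys the uniform upper bound of Proposition~\ref{upper bound prop}, and the denominator equals $\exp(N l \cdot u_N(x+j_0\om))$. A direct substitution yields
\[
\sabs{\Greene(x+j_0\om; E)} \le \exp\bigl\{-(\sabs{n(\alpha) - n(\alpha')} - Nl/50)\log\sabs{\la}\bigr\},
\]
so $G_N(x+j_0\om; E)$ is good.

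\smallskip

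\textbf{Part (iii): quantitative equidistribution.} For a fixed $x \in \T$, I would combine (i) with the quantitative equidistribution of the orbit $\{x + n\om\}$ implied by the DC hypothesis. Fix $M = N^q$ with $q < p$ small, and partition $[0, N^p)$ into $K \approx N^p/M$ consecutive blocks of length $M$ with starting points $x_k := x + kM\om$. The rotation by $M\om$ remains Diophantine (with modified parameters), so the Erd\H{o}s--Tur\'an inequality gives a discrepancy bound $D_K \ll K^{-c}$ for some $c > 0$ depending on the DC exponent, whence
\[
\#\{0 \le k < K : x_k \in \B_N^M\} \le K \cdot \meas(\B_N^M) + K \cdot D_K \ll K.
\]
For each such ``good'' $k$, part (ii) produces an $n \in [kM, (k+1)M)$ with $G_N(x + n\om; E)$ good, establishing existence. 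To upgrade to the count $\#\{\text{bad } n\} \ll N^p$, observe that the very same pigeonhole from (ii) yields, within each good block, that the fraction of $j \in [0, M)$ with $u_N(x_k + j\om) < (1-\tfrac{1}{60})\log\sabs{\la}$ is at most $\sim 60\delta = 3/(5l)$; summing over blocks bounds the total count of bad $n$ by $K \cdot M \cdot (3/(5l) + o(1)) \ll N^p$.

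\smallskip

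\textbf{Main obstacle.} The technically heaviest step is Part (i): establishing the LDT with a precise exponential measure bound. Although the Bourgain--Goldstein method is well-developed, some care is required in tracking the dependence of $a$ on the DC exponent, in calibrating the deviation scale against the data, and in reconciling the pointwise upper bound with the averaged lower bound of Proposition~\ref{lower bound prop}. Parts (ii) and (iii) are then relatively routine: (ii) is Cramer's rule and a single pigeonhole, while (iii) is a classical DC-equidistribution argument that recombines (i) and the quantitative form of (ii).
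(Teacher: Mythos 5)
Your plan for part (i) is on the mark and matches the paper: the upper bound from Hadamard plus the mean lower bound of Proposition~\ref{lower bound prop} feed into a large-deviation estimate for Birkhoff averages of subharmonic functions (the paper cites \cite{DK-book}, you cite Bourgain--Goldstein; same machinery). Parts (ii) and (iii) contain real problems.

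In part (ii) there is a slip that invalidates the conclusion. You relax the pigeonhole bound $u_N(x+j_0\om) > (1-\delta)\log\sabs{\la}$ to the weaker $u_N(x+j_0\om) \ge (1-\tfrac{1}{60})\log\sabs{\la}$ and then substitute. Tracking the constants: the numerator of Cramer's ratio is bounded by $\exp\{(Nl - \sabs{n(\alpha)-n(\alpha')})\log\sabs{\la} + CNl\}$ while your denominator lower bound is $\exp\{Nl(1-\tfrac{1}{60})\log\sabs{\la}\}$, giving an error term of order $\tfrac{Nl}{60}\log\sabs{\la}$. Since good means the loss is at most $\tfrac{\sabs{\scale}}{50}\log\sabs{\la} = \tfrac{N}{50}\log\sabs{\la}$, and $Nl/60 \gg N/50$ for $l\ge 2$, your bound does \emph{not} show the Green's function is good. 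The paper's choice $\delta = \tfrac{1}{100l}$ is calibrated precisely so that $Nl\delta = N/100$; you must use the sharper inequality $u_N(x+j_0\om) > (1-\delta)\log\sabs{\la}$ (which the pigeonhole actually gives, since if the average exceeds $(1-\delta)\log\sabs{\la}$ then at least one term does) rather than downgrading to $1/60$.

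Part (iii) has a genuine gap in the central idea. You propose to bound $\#\{k : x_k \in \B_N^M\}$ by $K\cdot\meas(\B_N^M) + K\cdot D_K$ using an Erd\H{o}s--Tur\'an discrepancy bound. That inequality controls the visit count only for sets of bounded \emph{complexity} (a finite union of intervals, where the constant scales with the number of intervals); it is false for a general measurable set of small measure --- such a set could contain the entire orbit segment. The set $\B_N^M$ is defined by an analytic inequality and has no a priori bound on the number of its connected components, so discrepancy alone gives nothing. The missing ingredient --- and the point of the paper's argument --- is that after truncating the Fourier expansions of the entries of $H_N(x)-E\,I$ at frequency $N^2$ (incurring only super-exponentially small errors), the defining inequality for $\B_N^M$ becomes a polynomial inequality, so $\B_N^M$ is contained in a \emph{semi-algebraic set of degree} $\Bigo(N^4 M)$, and hence of polynomially bounded complexity. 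It is this degree bound, combined with the sub-exponentially small measure, that allows one to invoke Bourgain's Corollary~9.7 and conclude $\#\{0\le k < N^p : x+k\om\in\B\} \ll N^p$ for \emph{every} $x$. Without the semi-algebraic structure the equidistribution step does not go through. Your subsequent Markov-type count of bad $j$'s within a good block is fine, but it rests on the broken first step.
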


\begin{proof}
(i) Recall from the proof of Proposition~\ref{lower bound prop} that the function $u (x) = u_N (x)$ has a {\em subharmonic} extension 
$$u (z) = u_N (z)  =  \frac{1}{N l} \log \abs{ \det \left[ H_N (z) - E \, I \right]} \, .$$

From \eqref{upper bound 1}, this function has the upper bound
$$u (z) \le \log \sabs{\la} + \Bigo(1) \quad \text{ for all } z \in \strip_r .$$

While it is not bounded from below, we showed in Proposition~\ref{lower bound prop} that on average, it has the bound
$$\int_\T u (x) d x \ge \log \sabs{\la} - \Bigo (1) .$$

Therefore, there is $x_0 \in \T$ such that
$$u (x_0)  \ge \log \sabs{\la} - \Bigo (1) .$$

This is all we need for the quantitative version of Birkhoff's ergodic theorem (showing that the averages of $u(x)$ along the orbits of the torus translation by a Diophantine frequency concentrate near the mean of $u$) to be applicable. See for instance Subsection 6.2.3 in \cite{DK-book} (specifically, Lemma 6.6) for more details. Therefore, say by Theorem 6.5 in \cite{DK-book}, we have that if $\om \in \DC_t$ and $M \ge t^{-2}$, then for some absolute constant $a > 0$ we have
$$\meas  \left\{  x \in \T \colon \babs{ \frac{1}{M} \sum_{j=0}^{M-1} u (x+j \om) - \int_\T u (x) d x }  >  S \, M^{-a}   \right\}   < e^{-M^a} ,$$
where $S$ is of the order of the above bounds on $u$ , hence $S = \Bigo (\log \sabs{\la})$.

 If $x$ does not belong to the set above, then using again the lower bound on the mean of $u$ derived in Proposition~\ref{lower bound prop}, we have 
\begin{align*}
\frac{1}{M} \sum_{j=0}^{M-1} u (x+j \om)  & \ge \int_\T u (x) d x   -  S \, M^{-a} \\
&  \ge \log \sabs{\la} - \Bigo (1) -  C \, \log \sabs{\la}  \, M^{-a} > (1-\delta) \log \sabs{\la} ,
\end{align*}
provided $M > l^{1/a}$ and $\sabs{\la}$ is large enough.

\smallskip

(ii) By Cramer's rule, for all indices $1 \le \alpha, \alpha' \le N l$ we have
\begin{align*}
\Greene (x) = \left( (H_N (x) - E \, I)^{-1} \right)_{(\alpha, \alpha')} =  \frac{ \minor (x) }{  \det \left[ H_N (x) - E \, I \right] } \, .
\end{align*}

Taking logarithms, 
\begin{align*}
\frac{1}{N} \log \abs{ \Greene (x) } & =  \frac{1}{N} \, \log \, \abs{ \minor (x) } -  \frac{1}{N } \log \abs{ \det \left[ H_N (x) - E \, I \right]} \\
& =  \frac{1}{N}  \log \, \abs{ \minor (x) } - l \, u_N (x) .
\end{align*}

By Proposition~\ref{upper bound prop}, for all $x \in \T$ we have the upper bound
\begin{equation*}
 \frac{1}{N} \, \log \, \abs{ \minor (x) } \le  l \, \log \sabs{\la} - \frac{\abs{ n (\alpha) - n (\alpha') }}{N }  \, \log \sabs{\la} + C l.
 \end{equation*}

If  $x  \notin \B_N^M$, then by item~\eqref{Green item 1} we  have
$$\frac{1}{M} \sum_{j=0}^{M-1} u_N (x+ j \om)  >  (1-\delta) \log \sabs{\la}  .$$

Then for some $0 \le j < M$ we must have
$$u_N (x + j \om) > (1-\delta) \log \sabs{\la}  .$$

We conclude that 
\begin{align*}
\frac{1}{N} \log \abs{ \Greene (x + j \om) } & \le  l \log \sabs{\la} -  \frac{\abs{ n (\alpha) - n (\alpha') }}{N }  \, \log \sabs{\la}  + C l \\
& -  l \log \sabs{\la} + \delta  l \log \sabs{\la} \\
& < - \left(  \frac{\abs{ n (\alpha) - n (\alpha') }}{N } - \frac{1}{50}   \right) \log \sabs{\la} \, .
\end{align*}
This implies~\eqref{greens estimate eq} for $\scale = [1, N]$, so $G_N (x + j \om ; E)$ is a good Green's function.

\medskip

(iii) In order to prove this last statement, it is enough to show that there is $p < \infty$ such that for every $x \in \T$, there is $0 \le k < N^p$ with $x + k \om \notin \B_N^M$, where $M$ is an integer with $n_0 \le M \ll  N$.

By the pointwise ergodic theorem, any typical orbit $\{ x + k \om \colon k \ge 0 \}$ will of course visit any set of positive measure. The point is to obtain a precise, quantitative version of this statement. To that extent, the fact that $\om$ is Diophantine and the algebraic structure of the set $\B_N^M$ are crucial.

First recall that $\meas (\B_N^M) < e^{-M^a}$. Furthermore,  the ine\-quality~\eqref{qBET} defining the set $\B_N^M$ may be rewritten as 
\begin{equation}\label{salg 1}
\prod_{j=0}^{M-1} \det \left[ H_N (x+j \om) - E \, I \right] \le \sabs{\la}^{(1-\delta) N M l} \, .
\end{equation}

$H_N (x) - E \, I$ may be regarded as an $N l \times N l$ matrix-valued analytic function. 
Let $f (x)$ be any of its entries. Expand $f (x)$ into its Fourier series $$f (x) = \sum_{m\in\Z} \hat{f}_m e^{2 \pi i \, m \, x}$$  and consider the truncation $$f_1 (x) := \sum_{\sabs{m} \le N^2 } \hat{f}_m e^{2 \pi i \, m \, x} ,$$
so that $\norm{f-f_1}_\infty \less e^{-N^2 } $.

Therefore, substituting each entry of $H_N (x) - E \, I$ by its truncation will produce negligible errors (i.e. super-exponentially small) in ~\eqref{greens estimate eq}.

 Hence we may assume that~\eqref{salg 1} is in fact a {\em trigonometric polynomial} inequality of degree $\le N^2 \cdot N l \cdot M = l N^3 M$. Further truncating  the power series for $\cos$ and $\sin$, we conclude that the set $\B_N^M$ described by~\eqref{salg 1} may be regarded as a {\em semi-algebraic set} of degree at most $\Bigo (N^4 M)$. 

Put $M=N^{1/2}$. We have a set $\B=\B_N^M$  which is contained in a semi-algebraic set of degree at most $\Bigo (N^5)$, whose measure is comparable to that of $\B$, hence it is sub-exponentially (in $N$) small. 

Then if $N_1 := N^p$, where $p$ is a large enough absolute constant, by Coro\-llary 9.7 in J. Bourgain's monograph~\cite{Bourgain-book}, for every $x \in \T$,
$$\#  \{ k = 0, \ldots, N_1 \, \colon \, x + k \om  \in \B \} \ll N_1  .$$

Thus there are (plenty of) integers $0 \le k \le N^p$ for which $x + k \om \notin \B$, which completes the proof.
\end{proof}

\begin{remark}\label{good intervals}
In the proof of localization we will apply item~(\ref{Green item 2}) in Proposition~\ref{Green's functions estimates} with $M$ being a small fraction of $N$, say $M = \frac{1}{100} N$. Then the statement may be reformulated as follows. 

There is an absolute constant $a>0$ and for every large enough integer $N$ there is a set of phases $\B_N = \B_N (\om, E)$ with $\meas \left(\B_N (\om, E) \right) < e^{- N^a}$ such that if $x \notin \B_N (\om, E)$ then $G_\scale (x ; E)$ is a good Green's function for some
$$\scale \in \left\{ [1, N] +j  \, \colon \, 0 \le j <  \frac{1}{100} N \right\} \, .$$

That is, $G_\scale (x ; E)$ is a good Green's function where  $\scale$ is the interval $[1, N]$ or a relatively small displacement thereof. 

With this reformulation, we have the analogue of Proposition 7.19 in J. Bourgain's monograph~\cite{Bourgain-book}. That proposition forms the basis for the proof of localization for quasi-periodic Schr\"{o}dinger operators  described in Chapter 10 of the monograph.

Finally, it is clear that similar statements to those in items~(\ref{Green item 2}) and~(\ref{Green item 3}) of Proposition~\ref{Green's functions estimates} also hold when replacing $G_{[1, N]}$ by $G_{[- N, N]}$.
\end{remark}


\subsection{The sketch of the proof of localization}

\medskip

The proof of localization consists of a parameter elimination argument that uses the Green's functions estimates in Proposition~\ref{Green's functions estimates}  and semi-algebraic sets considerations. The argument is analogous to the one sketched in~\cite{BJ-band}, which in turn references the method introduced by J. Bourgain and M. Goldstein in~\cite{B-G-first} to establish non-perturbative Anderson localization for quasi-periodic Schr\"odinger operators on the integer lattice. We present below a sketch of the  argument and indicate the point where it differs from~\cite{B-G-first}. 

Bourgain-Goldstein's method uses the Sch'nol-Simon theorem\footnote{This result holds
 in our more general setting, for instance as a consequence of the generalization of Sch'nol's theorem obtained in~\cite{RuiHan}. See also 
Theorem 4.1 in~\cite{JM-survey}.}: to esta\-blish localization, it is enough to show that every extended state (i.e. gene\-ralized eigenvector) is exponentially decaying.

More precisely, if for any $\vpsi = (\vpsi_n)_{n\in\Z}\subset\R^l$ and $E \in\R$ such that
\begin{equation}\label{generalized eigenvector}
H  (x) \, \vpsi = E \, \vpsi, \  \snorm{\vpsi_0}_2 =1 \  \text{and}  \ \snorm{\vpsi_n}_2 \less (1+\sabs{n})^2 \ \text{ for all } n \in \Z ,
\end{equation}
one shows that necessarily for some $c > 0$
\begin{equation}\label{exp decay of gev}
\snorm{\vpsi_n}_2 \le e^{-c \, \sabs{n}}  \text{ as } \sabs{n} \to \infty ,
\end{equation}
then $H (x)$ satisfies Anderson localization. 

 \medskip
 
Consider $x \in \T$, $\vpsi = (\vpsi_n)_{n\in\Z}$ and $E\in \R$ such that \eqref{generalized eigenvector} holds. 

Clearly $H  (x) \, \vpsi = E \, \vpsi$ is equivalent to
$$- W_{n+1} (x) \,  \vpsi_{n+1} - W_n^\top (x)  \, \vpsi_{n-1} + (V_n (x) - E \, I) \, \vpsi_n = 0 \quad \text{ for all } \ n \in \Z , $$
which implies that for every interval $\scale = [a, b] \subset \Z$,
\begin{equation*}\label{eq1 AL}
(H_\scale (x) - E \, I) \, \vpsi_\scale = 
\begin{bmatrix}
W_a^\top (x) \, \vpsi_{a-1} \\
\vec{0}\\
\vdots\\
\vec{0}\\
W_{b+1} (x) \, \vpsi_{b+1}
\end{bmatrix} 
=
\begin{bmatrix}
W_a^\top (x) \, \vpsi_{a-1} \\
\vec{0}\\
\vdots\\
\vec{0}\\
\vec{0}
\end{bmatrix} 
+
\begin{bmatrix}
\vec{0}\\
\vec{0} \\
\vdots\\
\vec{0}\\
W_{b+1} (x) \, \vpsi_{b+1}
\end{bmatrix}, 
\end{equation*}
where $\vec{0}$ is the null vector in $\R^l$.

From here we derive that
\begin{equation*}\label{eq2 AL}
 \vpsi_\scale =  
 G_\scale (x ; E) \,
  \begin{bmatrix}
W_a^\top (x) \, \vpsi_{a-1} \\
\vec{0}\\
\vdots\\
\vec{0}
\end{bmatrix} 
+
G_\scale (x ; E) \,
\begin{bmatrix}
\vec{0} \\
\vdots\\
\vec{0}\\
W_{b+1} (x) \, \vpsi_{b+1}
\end{bmatrix} .
\end{equation*}

It follows that for $j \in \scale = [a, b]$ we have the following identity:
\begin{equation}\label{eq star AL}
\vpsi_j = G_{\scale, (j, a) } (x ; E) \, W_a^\top (x) \, \vpsi_{a-1} + G_{\scale, (j, b) } (x ; E) \, W_{b+1} (x) \, \vpsi_{b+1} \, .
\end{equation}

This formula makes it apparent how exponential decay of off-diagonal terms of the Green's function can lead to exponential decay of an extended state. It is more complicated than its counterpart in~\cite{B-G-first} because its terms are $l \times l$ blocks rather than numbers, and there are extra factors related to the weights $W (x)$. However, as the weights are bounded, the relevant estimates are derived in a similar manner, as it can be seen in the two lemmas below.

\medskip

Given two integers $j, k$, we write $j \less k$ when $j \le C k$ for some appropriate absolute constant $C>0$, while 
$j \asymp k$ means that $j \less k$ and $k \less j$.

Let us say that an integer $j$ is {\em well inside} an interval $\scale = [a, b] \subset \N$ if it is far away from its endpoints, for instance if $2 a \le j \le \frac{b}{2}$, so that 
$$\sabs{j-a} = j-a \more  j \quad \text{and} \quad  \sabs{j-b} = b-j \more j .$$

\begin{lemma}\label{AL lemma 1}
Let $N \in \N$ and let $\scale = [a, b] \subset \N$ with $a \asymp N$ and $b \asymp N$. For instance $\scale$ could be $ \left[ \frac{1}{2} N, 2 N \right]$.
Consider $x \in \T$, $\vpsi = (\vpsi_n)_{n\in\Z} \subset \R^l$ and $E\in \R$ such that \eqref{generalized eigenvector} holds.
If $G_\scale (x ; E)$ is a good Green's function and if $j$ is well inside $\scale$ then
$$\snorm{\vpsi_j}_2 \less e^{-c \, j} \, ,$$
where $c = \Bigo (\log \sabs{\la})$.

A similar statement also holds for $\scale \subset \Z_-$.
\end{lemma}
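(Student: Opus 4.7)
The plan is to apply the resolvent-type identity \eqref{eq star AL} at the block index $j \in \scale$, then use the good Green's function bound from \eqref{good Green} to control the two operator-norm coefficients, and finally absorb the polynomial factors coming from the generalized eigenfunction growth condition \eqref{generalized eigenvector} and from the bounded weights $\norm{W}_\infty$.

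First I would start from the identity
$$\vpsi_j = G_{\scale, (j, a)} (x; E) \, W_a^\top (x) \, \vpsi_{a-1} + G_{\scale, (j, b)} (x; E) \, W_{b+1} (x) \, \vpsi_{b+1},$$
valid for $j \in \scale$. Taking $\ell^2$-norms in $\R^l$ and using $\norm{W_n (x)} \le \norm{W}_\infty \less 1$ together with the polynomial growth $\snorm{\vpsi_n}_2 \less (1+\sabs{n})^2$, I get
$$\snorm{\vpsi_j}_2 \less (1+\sabs{a})^2 \, \norm{G_{\scale, (j, a)} (x; E)} + (1+\sabs{b})^2 \, \norm{G_{\scale, (j, b)} (x; E)}.$$
Since $a \asymp N$ and $b \asymp N$, both polynomial prefactors are $\less N^2$.

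Next I would invoke the good Green's function hypothesis \eqref{good Green}, which gives
$$\norm{G_{\scale, (j, a)} (x; E)} \le e^{-\left( \sabs{j-a} - \sabs{\scale}/50 \right) \log \sabs{\la}}$$
and analogously for the $(j, b)$ block. Because $j$ is well inside $\scale$, one has $\sabs{j-a} \more j$ and $\sabs{j-b} \more j$, while the length $\sabs{\scale} = b-a+1 \asymp N \asymp j$. Choosing the implicit constants so that $\min (\sabs{j-a}, \sabs{j-b})$ exceeds, say, ten times $\sabs{\scale}/50$, I obtain $\sabs{j-a} - \sabs{\scale}/50 \more j$ and the analogous bound for $b$. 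Hence both Green's function blocks are bounded by $e^{-c' j \log \sabs{\la}}$ for some absolute $c' > 0$.

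Combining these estimates yields
$$\snorm{\vpsi_j}_2 \less N^2 \, e^{-c' j \log \sabs{\la}} \less e^{-c \, j},$$
with $c = \Bigo(\log \sabs{\la})$, after absorbing the polynomial factor $N^2 \asymp j^2$ into a slightly smaller exponential rate (permissible since $\log \sabs{\la}$ is large by the assumption $\sabs{\la} \ge \la_0$). The case $\scale \subset \Z_-$ is handled by the same argument, using the identity \eqref{eq star AL} with the roles of the two boundary terms adjusted to the negative endpoint; all the above estimates carry over verbatim since they depend only on $\sabs{j}$ and on the distances $\sabs{j-a}, \sabs{j-b}$.

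The main obstacle, though a mild one, is bookkeeping: ensuring that the good Green's function estimate is stated in terms of block indices $n, n'$ while the eigenfunction growth is over $\Z$, and checking that the meaning of ``$j$ well inside $\scale$'' guarantees $\sabs{j-a} - \sabs{\scale}/50 \more j$ with a uniform constant. Once that is set up, the lemma follows by a direct one-line computation from Bourgain–Goldstein style, and no further analytic input beyond \eqref{good Green} and the boundedness of $W$ is required.
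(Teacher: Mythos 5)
Your proof is correct and follows the same route as the paper's: apply the identity \eqref{eq star AL}, bound the Green's function blocks via Definition~\ref{def good Green}, use the boundedness of $W$ and the polynomial growth from \eqref{generalized eigenvector}, and absorb the polynomial prefactor into the exponential. The paper only sketches these steps; you fill in the arithmetic (e.g.\ verifying $\sabs{j-a} - \sabs{\scale}/50 \more j$ under the ``well inside'' hypothesis) consistently with the author's intent, so there is nothing to flag.
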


\begin{proof}
Using~\eqref{eq star AL} we have
\begin{align*}
\snorm{\vpsi_j}_2 & \le \norm{ G_{\scale, (j, a) } (x ; E) } \, \norm{ W_a^\top (x) } \, \snorm{ \vpsi_{a-1} }_2 \\
& + \norm{ G_{\scale, (j, b) } (x ; E) } \, \norm{ W_{b+1} (x) } \, \snorm{ \vpsi_{b+1} }_2  .
\end{align*}

Since $G_\scale (x ; E)$ is a good Green's function and $j$ is well inside $\scale$, its $(j, a)$ and $(j, b)$ entries decay exponentially fast in $j$. 

Moreover, since $\vpsi$ is a generalized eigenvector,
$\snorm{\vpsi_{a-1}}_2 \less a^2 \less j^2$,
and similarly for $\vpsi_{b+1}$. 

Recall that for any integer $k$, $W_k (x) = W (x+k\om)$ and $W (x)$ is bounded.

The conclusion then follows.
\end{proof}

\begin{lemma}\label{AL lemma 2}
Let $x \in \T$, $\vpsi = (\vpsi_n)_{n\in\Z} \subset \R^l$ and $E\in \R$ such that \eqref{generalized eigenvector} holds and let $N_0 \in \N$. 
If for some $c>0$
$$\snorm{\vpsi_{N_0}}_2 \le \eta  \quad \text{and} \quad  \snorm{\vpsi_{-N_0}}_2 \le \eta \, ,$$
then
$$\dist \left( E,  \, \text{\rm spectrum } H_{(-N_0, N_0)} (x) \right) \less  \, \eta \, .$$
\end{lemma}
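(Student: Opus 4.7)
The plan is to view $\vpsi_{\scale}$, the restriction of the generalized eigenvector to the interval $\scale := (-N_0, N_0) = [-N_0+1, N_0-1]$, as an approximate eigenvector of the finite-volume operator $H_\scale(x)$ at energy $E$, with boundary defect controlled by $\eta$. Self-adjointness of $H_\scale(x)$ then converts an approximate eigenvector into a spectral distance bound.

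First, I would expand the eigenvalue equation $H(x)\vpsi = E\vpsi$ at each site $n \in \scale$. At interior sites ($\sabs{n}<N_0-1$) the identity reads exactly as in the finite-volume operator $H_\scale(x)$, while at the two endpoints $n = -N_0+1$ and $n = N_0-1$ the finite-volume operator is missing the hopping terms $-W_{-N_0+1}^\top(x)\,\vpsi_{-N_0}$ and $-W_{N_0}(x)\,\vpsi_{N_0}$ respectively. Consequently
\begin{equation*}
(H_\scale(x) - E\,I)\,\vpsi_\scale \;=\; \vec{b}\,,
\end{equation*}
where $\vec{b}\in(\R^l)^{\sabs{\scale}}$ is the vector supported on the two endpoint blocks with values $W_{-N_0+1}^\top(x)\,\vpsi_{-N_0}$ and $W_{N_0}(x)\,\vpsi_{N_0}$. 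Since $W \in \qpcmatl{}$ is bounded by some constant depending only on the data, and $\snorm{\vpsi_{\pm N_0}}_2 \le \eta$ by hypothesis, we obtain $\snorm{\vec{b}}_2 \lesssim \eta$.

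Next, I would use the normalization $\snorm{\vpsi_0}_2 = 1$, which since $0 \in \scale$ forces $\snorm{\vpsi_\scale}_2 \ge 1$. Combining the two bounds,
\begin{equation*}
\frac{\snorm{(H_\scale(x) - E\,I)\,\vpsi_\scale}_2}{\snorm{\vpsi_\scale}_2} \;\lesssim\; \eta\,.
\end{equation*}
The operator $H_\scale(x)$ is a self-adjoint matrix (the weights $W$ appear as $W$ and $W^\top$ in symmetric off-diagonal positions, and $R$, $F$ are symmetric), so $H_\scale(x) - E\,I$ is also self-adjoint with real spectrum. For any self-adjoint matrix $A$ and any nonzero vector $v$ one has $\dist(0, \text{spectrum}(A)) \le \snorm{Av}_2/\snorm{v}_2$ (indeed, if $E$ were farther than $\delta$ from the spectrum of $H_\scale(x)$, then $\snorm{(H_\scale(x)-E\,I)^{-1}} \le \delta^{-1}$ by the spectral theorem, contradicting the displayed inequality with $\delta$ smaller than the bound $\lesssim \eta$). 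Applied to $A = H_\scale(x) - E\,I$ and $v = \vpsi_\scale$, this yields
\begin{equation*}
\dist\bigl(E,\,\text{spectrum}\,H_\scale(x)\bigr) \;\lesssim\; \eta\,,
\end{equation*}
as required.

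There is no real obstacle here; the only points to be careful about are (i) correctly identifying which terms of the infinite eigenvalue equation drop out when passing to the finite-volume restriction — these are precisely the two boundary hopping terms coupling $\scale$ to its complement — and (ii) verifying that $H_\scale(x)$ is indeed self-adjoint, which uses the symmetry of $R$ and $F$ together with the $W, W^\top$ structure in~\eqref{J-op}. The absolute constant implicit in the conclusion depends only on $\norm{W}_\infty$.
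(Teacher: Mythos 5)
Your proposal is correct and takes essentially the same approach as the paper: both isolate the boundary defect $(H_\scale(x) - E\,I)\,\vpsi_\scale = \vec{b}$ with $\snorm{\vec{b}}_2 \lesssim \eta$ and convert it into a spectral distance bound via self-adjointness of $H_\scale(x)$ and the normalization $\snorm{\vpsi_0}_2 = 1$. The only cosmetic difference is that the paper inverts the defect identity (via~\eqref{eq star AL} at $j=0$) to deduce $\norm{G_\scale} \gtrsim \eta^{-1}$ and then uses $\dist(E, \text{spectrum}\,H_\scale) = \norm{G_\scale}^{-1}$, whereas you apply the variational approximate-eigenvector characterization of spectral distance directly — the same inequality read in the opposite direction.
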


\begin{proof}
Using~\eqref{eq star AL} with $\scale = (-N_0, N_0)$ and $j=0$ we have:
\begin{align*}
1 = \snorm{\vpsi_0}_2 & \le \norm{ G_{\scale, (0, - N_0 + 1) } (x ; E) } \, \norm{W}_\infty \, \snorm{ \vpsi_{-N_0} }_2 \\
& + \norm{ G_{\scale, (0, N_0 - 1) } (x ; E) } \, \norm{W}_\infty \, \snorm{ \vpsi_{N_0} }_2\\
& \less \norm{ G_{\scale} (x ; E) } \, \eta .
\end{align*}

Then
\begin{align*}
\dist\left( E,   \text{\rm spectrum } H_{(-N_0, N_0)} (x) \right) &= \norm{  \left(H_\scale (x) - E \, I\right)^{-1}}^{-1} \\
& = \norm{ G_{\scale} (x ; E) }^{-1} \less \eta ,
\end{align*}
which proves the lemma.
\end{proof}

Fix any $x_0\in \T$ and let $N \in \N$ be any large enough scale. Consider a much larger scale $N' = N^C$. 

The idea of the proof of localization for the operator $H (x_0)$ is to {\em pave}\footnote{Paving $\scale'$ by the intervals $\{ \scale_n \}_n$ means that every point $k \in \scale'$ is {\em well inside} some interval $\scale_n$ in the collection.} an interval $\scale' \supset [ \frac{1}{2} N', 2 N' ]$ of length $\sabs{ \scale' } \asymp N'$ by a collection $\{ \scale_n \}_n$ of intervals
of length 
$\sabs{\scale_n} = N$, where $G_{\scale_n} (x_0 ; E)$ are all good Green's functions. 

An application of the resolvent identity (see Lemma 10.33 in~\cite{Bourgain-book} or  Section 15, Step 3 in~\cite{B-G-first}) implies that $G_{\scale'} (x_0 ; E)$ is also a good Green's function. 

Since $N'$ is well inside the interval $\scale'$, Lemma~\ref{AL lemma 1} implies $\snorm{\vpsi_{N'}}_2 \le e^{- c  \, N'}$, where $c = \Bigo (\log \sabs{\la})$. 

This argument applies to any large enough $N$, and it may be replicated on the negative side $\Z_-$. Hence it establishes exponential decay of the extended state $\vpsi$ and thus Anderson localization.

It remains to explain the paving procedure of a large interval by small intervals whose corresponding  Green's functions are good. It is enough to show that for all $n$ with $\sqrt{N'} \le n \le 2 N'$, there is $0 \le j_n \le \frac{1}{100} N$ such that 
\begin{equation}\label{eq one star} 
G_{[1, N] + j_n} (x_0 + n \om ; E) \text{ is a good Green's function for all } E .
\end{equation}

Indeed, if we denote $\scale_n := [1, N] + j_n + n$, then 
$$G_{\scale_n} (x_0 ; E) = 
G_{[1, N] + j_n} (x_0 + n \om ; E)$$ is a good Green's function. Moreover, 
$$\bigcup \left\{ \scale_n \colon \sqrt{N'} \le n \le 2 N'\right\} \supset \left[ \frac{1}{2} N', 2 N'\right] ,$$
 and consecutive intervals in this collection overlap by a lot (i.e. their intersection has length $\more N$). Hence the paving property holds.

 In order to establish~\eqref{eq one star}, by item~\eqref{Green item 2} of Proposition~\ref{Green's functions estimates} and by Remark~\ref{good intervals}, it is enough to have that 
 \begin{equation}\label{eq two stars}
 x_0 + n \om \notin \bigcup_E \B_N (\om, E) \ \text{ for all } \ \sqrt{N'} \le n \le 2 N'.
 \end{equation}
 
Ensuring that such a stretch of the orbit of $x_0$ under the translation by the frequency $\om$ avoids the corresponding exceptional set above, will require the elimination of a set $\Omega$ of frequencies. The problem is the dependence of these exceptional sets on the eigenvalue $E$. We explain below how to eliminate this dependence on $E$ (for full details, see Section 4 in \cite{B-G-first}).

By Proposition~\ref{Green's functions estimates} item~\eqref{Green item 3}, there are plenty of integers $N^{1+} \le n \le N^p-N$ such that $G_{[1, N] + n} (x_0  ; E) = G_N (x_0 + n \om ; E)$ is a good Green's function. Since there are plenty of integers $N^{1+} \le j \le N^p$ well inside $[1, N] + n$, by Lemma~\ref{AL lemma 1}
$$\snorm{\vpsi_j}_2 \less e^{-c \, j} \, ,$$
where $c = \Bigo (\log \sabs{\la})$.

Repeating this argument on $\Z_-$, we can ensure that there is an integer $N_0$ with $N^{1+} \le N_0 \le N^p$ such that both estimates 
$$\snorm{\vpsi_{N_0}}_2 \le e^{-c \, N_0}  \quad \text{and} \quad  \snorm{\vpsi_{-N_0}}_2 \le e^{-c \, N_0}$$
hold (it would be enough if they held for some $N_0$ and $- N_0'$, where $N_0' \asymp N_0$). 

Applying Lemma~\ref{AL lemma 2} we conclude the following. If $E$ is a generalized eigenvalue for $H (x_0)$, then for every large enough scale $N$, there is $N_0$ with $N^{1+} \le N_0 \le N^p$ such that $E$ is an almost eigenvalue of the finite volume operator  $H_{(-N_0, N_0)} (x_0)$:
$$\dist \left( E,  \, \text{\rm spectrum } H_{(-N_0, N_0)} (x_0) \right) \less  \,  e^{-c \, N_0} <  e^{-c \, N^{1+}} \, .$$ 

We may then replace the condition~\eqref{eq two stars} by 
\begin{equation}\label{eq three stars}
 x_0 + n \om \notin \sbad_N (\om) \ \text{ for all } \ \sqrt{N'} \le n \le 2 N'\,, \text{ where } 
 \end{equation}
$$\sbad_N (\om) := \bigcup \left\{ \B_N (\om, E) \colon E \in \text{\rm spectrum } H_{(-N_0, N_0)} (x_0), \ N_0 \le N^p \right\} .$$
 
This is a union of at most $N^{p+1}$ many sets of measure $< e^{- N^a}$, so
$\meas \left( S_N (\om) \right) < e^{- N^{a'}} $.
Consider the set
$$\sbad_N :=  \left\{ (\om, x) \colon \om \ \text{ Diophantine } \text{ and } \ x \in 
S_N (\om) \right\} .$$

As in the proof of item~\eqref{Green item 3} of Proposition~\ref{Green's functions estimates}, $\sbad_N$ may be regarded as a semi-algebraic set of sub-exponentially  small measure in $N$ but polynomial complexity (we would have to restrict the Diophantine condition~\eqref{DC} to integers $j < N$). Lemma 6.1 in \cite{B-G-first} then provides an estimate on the measure of the set
$$\Omega_N := \left\{ \om \in \T \colon (\om, x_0 + n \om) \in \sbad_N \ \text{ for some } n \sim N'   \right\} ,$$
namely
$$\meas \left( \Omega_N \right) < (N')^{- 1/2} = N^{- C/2} .$$

Since $ \sum_N N^{- C/2} < \infty$, by Borel-Cantelli, the set
$\Omega := \bigcap_k \bigcup_{N\ge k} \Omega_N$ 
has measure zero.
We conclude that if $\om$ is Diophantine and $\om \notin \Omega$, then $\om \notin \Omega_N$ for all large enough $N$. This ensures that~\eqref{eq three stars} holds, and hence it justifies the paving procedure. 

The interested reader may find all the remaining details in~\cite{B-G-first} as well as in Chapter 10 of J. Bourgain's monograph~\cite{Bourgain-book}.

\def\cprime{$'$} \def\cprime{$'$}
\providecommand{\bysame}{\leavevmode\hbox to3em{\hrulefill}\thinspace}
\providecommand{\MR}{\relax\ifhmode\unskip\space\fi MR }
\providecommand{\MRhref}[2]{%
  \href{http://www.ams.org/mathscinet-getitem?mr=#1}{#2}
}
\providecommand{\href}[2]{#2}

\end{document}